\newtheorem{theorem}{Theorem}
\newtheorem{lemma}[theorem]{Lemma}
\newcommand{\removelatexerror}{\let\@latex@error\@gobble}
\newcounter{ipCounter}
\NewDocumentEnvironment{IPFormulation}{m}{%
\refstepcounter{ipCounter}
\begin{algorithm}[#1]%

% other code
}{%
\end{algorithm}
\addtocounter{algocf}{-1}
}
\NewDocumentEnvironment{IPFormulationStar}{m}{%
\refstepcounter{ipCounter}
\begin{algorithm*}[#1]%

% other code
}{%
\end{algorithm*}
\addtocounter{algocf}{-1}
}
 \newcolumntype{F}{>{$\displaystyle\,}r<{$}@{\hspace{0.0em}}}
 \newcolumntype{C}{>{$\displaystyle\,}c<{$}@{\hspace{0.0em}}}
 \newcolumntype{B}{>{$\displaystyle\,}r<{$}@{\hspace{0.0em}}}
 \newcolumntype{R}{>{$\displaystyle}r<{$}@{\hspace{0.2em}}}
 \newcolumntype{S}{>{$\displaystyle}r<{$}@{\hspace{0.2em}}}
 \newcolumntype{L}{>{$\displaystyle}l<{$}@{\hspace{0.2em}}}
 \newcolumntype{Q}{>{$\displaystyle}l<{$}@{\hspace{0.3em}}}
 \newcommand{\tagIt}[1]{\refstepcounter{equation}\textnormal{({\theequation})} \label{#1}}
\begin{document}

\title{Approximate and Incremental\\Network Function Placement}

\author{Tam\'as Lukovszki \quad Matthias Rost \quad Stefan Schmid\\
{\small E\"otv\"os Lor\'and University, Hungary \quad 
TU Berlin, Germany \quad
Aalborg University, Denmark
}
}
 
%\author[1]{Tam\'as Lukovszki}
%\author[2]{Matthias Rost}
%\author[3]{Stefan Schmid}
%\affil[1]{E\"otv\"os Lor\'and University, Budapest, Hungary\\
 % \texttt{lukovszki@inf.elte.hu}}
%\affil[2]{TU Berlin, Berlin, Germany\\
%  \texttt{mrost@inet.tu-berlin.de}}
%\affil[3]{Aalborg University, Denmark\\
%  \texttt{schmiste@cs.aau.dk}}
%\authorrunning{T. Lukovszki, M. Rost, S. Schmid} 

\newcommand{\capa}{\kappa}

%  \institute{
%$^1$Faculty of Informatics,
%  E\"otv\"os Lor\'and University, Budapest, Hungary\\
% {\tt lukovszki@inf.elte.hu}\\
%$^2$TU Berlin \& Telekom Innovation Laboratories, Berlin, Germany\\
% {\tt stefan.schmid@tu-berlin.de}
%}

%\markboth{IEEE/ACM TRANSACTIONS ON NETWORKING, VOL. X, NO. Y, February 2017}
%{Shell \MakeLowercase{\textit{et al.}}: Bare Demo of IEEEtran.cls for IEEE Journals}

\maketitle

\sloppy

\begin{abstract}
The virtualization and softwarization
of modern computer networks
introduces
interesting new opportunities for
a more flexible placement of 
network functions and middleboxes (firewalls, proxies, traffic optimizers, 
virtual switches, etc.).
This paper studies approximation algorithms for
the incremental deployment of a minimum number of middleboxes at optimal locations,
such that capacity constraints at the middleboxes and length constraints on the communication routes are respected.
Our main contribution is a new, purely combinatorial and 
rigorous proof for the submodularity of the function maximizing the number of communication requests that can be served by a given set of middleboxes. 
 Our proof allows us to devise a deterministic approximation algorithm which uses
%a novel and fast matching 
an augmenting path approach to compute the submodular function. 
This algorithm does not require any changes to the locations of existing
middleboxes or the preemption of previously served communication pairs when additional middleboxes are deployed, previously accepted communication pairs just can be handed over to another middlebox. It is hence particularly attractive for incremental deployments.
We prove that the achieved polynomial-time approximation bound is optimal, 
unless $P=\mathit{NP}$. 
This paper also initiates the study of a weighted problem variant,
in which entire groups of nodes need to communicate via a 
middlebox (e.g., a multiplexer or a shared object), possibly at different rates.
We present an LP relaxation and randomized rounding algorithm for this problem,
leveraging an interesting connection to scheduling. 
\end{abstract}

\section{Introduction}\label{sec:intro}
%\vspace{-2pt}

Middleboxes are ubiquitous in modern computer networks
and provide a wide spectrum of in-network functions related to 
security, performance, and policy compliance.
It has recently been reported that 
the number of middleboxes in enterprise networks can be of the same order of
magnitude as the number of routers~\cite{someone}.

While in-network functions were traditionally implemented in specialized hardware
appliances and middleboxes,
computer networks in general and middleboxes in particular become more and more
software-defined and virtualized~\cite{opennf}:
network functions can be implemented in software and deployed fast and flexibly
on the virtualized network nodes, e.g., running in a virtual machine
on a commodity x86 server.
Modern computer networks also offer new flexibilities in terms of how
traffic can be \emph{routed} through middleboxes and virtualized data plane appliances (often
called Virtual Network Functions, short \emph{VNFs})~\cite{MeasuRouting}.
In particular, the advent of \emph{Software-Defined Network (SDN)}
technology allows network operators
to steer traffic through middleboxes (or chains of middleboxes) using arbitrary routes, i.e., along routes
which are not necessarily shortest paths, or
not even loop-free~\cite{reviewer-1,reviewer-2,flowtags,simple}.
In fact, OpenFlow, the standard SDN protocol today,
not only introduces a more flexible routing, but
itself allows to implement basic middlebox functionality, on the switches~\cite{road}:
an OpenFlow switch can match, and perform actions upon, not only
layer-2, but also layer-3 and layer-4 header fields.

However, not much is known today about how to exploit these
flexibilities algorithmically.
A particularly interesting problem regards the question of
where to deploy a minimum number of middleboxes such that basic routing and capacity
constraints are fulfilled. Intuitively,
the smaller the number of deployed network functions,
the longer the routes via these functions,
and a good tradeoff between
deployment costs and additional latency must be found.
Moreover, ideally, middleboxes should be \emph{incrementally
deployable}: when additional middleboxes are deployed,
existing placements do not have to be changed. 
This is desirable especially in deployment scenarios with budget constraints, where an existing deployment has to be extended by new middleboxes.

%\noindent\textbf{Our Contributions.}

\subsection{Our Contributions}

We initiate the study of the natural
problem of (incrementally) placing
a minimum number of middleboxes or network functions.

Our main technical result is a deterministic and greedy (polynomial-time)
$O(\log{(\min\{\kappa,n\})})$-approximation algorithm for the (incremental)
middlebox placement problem in $n$-node networks where
capacities are bounded by $\kappa$. 
The algorithm is attractive for incremental deployments: 
it does not require any changes to the locations of existing
middleboxes or the preemption of previously served communication pairs when additional middleboxes are deployed.

At the heart of our algorithm lies a new and purely combinatorial proof of the
submodularity of the function maximizing the number of pairs that can be served by a given set of middleboxes. 
The submodularity proof directly implies a 
deterministic approximation algorithm for the minimum
middlebox deployment problem.
%our problem,
We show that the derived approximation bound
is asympotically optimal in the class of all polynomial-time algorithms, 
unless
$P=\textit{NP}$.

This paper also initiates the discussion of two generalizations
of the network function placement problem: (1) a 
 model where not only node pairs but entire node \emph{groups}
need to be routed via certain network functions (e.g., a multiplexer or 
a shared object in a distributed cloud),
and (2) a weighted model where nodes have arbitrary resource requirements.
By leveraging a connection to scheduling,
we show that these problems can also be approximated efficiently. 

We believe that our model and approach has ramifications
beyond middlebox deployment. For instance, 
our model also captures fundamental problems arising in the context of incremental SDN deployment (solving an open problem in~\cite{panopticon-conf,icnp15shear})
or distributed cloud computing where resources need to be allocated for large user groups.

We also initiate the study of a weighted problem variant,
where entire groups of nodes need to communicate via a 
shared network function (e.g., a multiplexer of a multi-media
conference application or a shared object), possibly at different rates.
We present an LP relaxation and randomized rounding algorithm for this problem,
establishing an interesting connection to scheduling. 

\subsection{Novelty and Related Work}

In this paper we are interested in algorithms
which provide formal approximation guarantees. 
In contrast to classic covering problems~\cite{Chvatal79,Feige98,wolsey-submodular}:
(1)~we are interested in the distance \emph{between communicating pairs}, \emph{via} the covering
nodes, and not \emph{to} the covering nodes;
(2)~we aim to support \emph{incremental deployments}: 
middlebox locations selected earlier in time
as well as the supported communication pairs
should not have to be changed when deploying additional middleboxes;
(3)~
we consider
a \emph{capacitated} setting where the number of items which can be assigned to a node is bounded by~$\kappa$.

To the best of our knowledge, 
so far, only heuristics or informal studies without
complete or combinatorial proofs of the approximability~\cite{midas,panopticon,mehraghdam2014specifying,someone}
as well as algorithms with an exponential runtime in the worst-case~\cite{HSVB15,panopticon},
 of the 
(incremental and non-incremental) middlebox deployment problem
have been presented in the literature.
Our work also differs from function chain embedding 
problems~\cite{sirocco16path,sirocco15,sss16moti} which often revolve around other
objectives such as maximum request admission or minimum link load.

Nevertheless, we in this paper show that we can build upon
hardness
results on uncapacitated covering problems~\cite{LundYannanakis94}
as well as Wolsey's study of vertex and set covering problems
with hard capacities~\cite{wolsey-submodular}.
An elegant alternative proof to
Wolsey's dual fitting approach, based on
combinatorial arguments, is due to
Chuzhoy and Naor~\cite{covering-capa}.
In~\cite{covering-capa} the authors also show that using LP-relaxation approaches is generally difficult,
as the integrality gap of a natural linear program for
the weighted and capacitated vertex and set covering problems
is unbounded.

\noindent \textbf{Bibliographic Note.} A preliminary version of
this paper appeared in CCR~\cite{ccr16match}.

\subsection{Putting Things into Perspective}\label{sec:idea}

Our problem is a natural one and, as mentioned, 
features some interesting connections to and has implications for
classic
(capacitated) problems. 
To make things more clear and put things into perspective, 
in the following, we will elaborate more on this relationship.

The generalized dominating set problem~\cite{Bansal11} asks for a set of
dominating nodes of minimal cardinality, such that the
distance from any network node to a dominator is at most $\ell$. In the capacitated version
of the problem, the number
of network nodes which can be dominated by a node is limited.
Similarly, capacitated facility location problems~\cite{Aardal15,Chudak05}
ask for locations to deploy facilities, subject to
capacity constraints,
such that the number of facilities as well as the distance to the facilities
are jointly optimized.
In contrast to these problems where the distance \emph{to} a dominator or facility
\emph{from a given node} is measured, in our middlebox deployment problem,
we are interested in the distance (resp.~stretch) 
\emph{between node pairs}, \emph{via} the middlebox.
At first sight, one may intuitively expect that optimal solutions to
dominating set or facility location problems are also good approximations
for the middlebox deployment problem. However, this is not the case, as we illustrate
in the following. Consider an optimal solution to the distance $\ell/2$ dominating set problem:
since the distance to any dominator is at most $\ell/2$, the locations of the dominators
are also feasible locations for the middleboxes: the route between two nodes via the
dominator is at most $\ell/2+\ell/2=\ell$.
However, the number of dominators in this solution can be much larger than the number
of required middleboxes.

Consider the example in
Figure~\ref{fig:worst-case}: in a star network where communicating node pairs are located
at the leaves, at depth 1 and $\ell-1$, and where node capacities are sufficiently high such that
they do not constitute a bottleneck,
a \emph{single} middlebox $m$ in the center is sufficient. However, the
stricter requirement that dominating nodes must be at distance at most $\ell/2$,
results in a dominating set of cardinality
$\Omega(n/\ell)$, i.e., $\Omega(n)$ for constant $\ell$.

\begin{figure}[t]
\centering
\includegraphics[width=0.8\columnwidth]{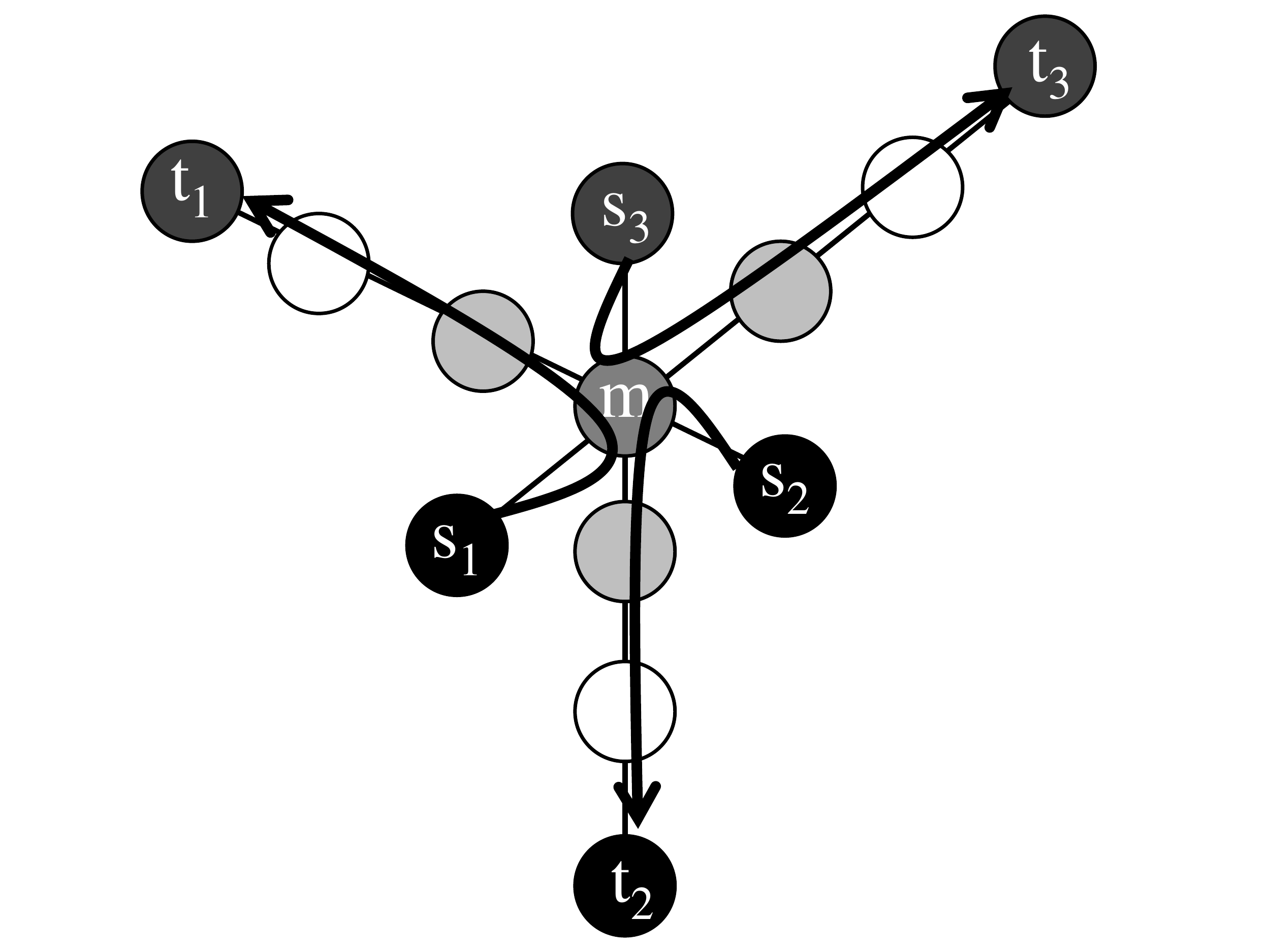}
\caption{Example: A minimum cardinality dominating set (three nodes, in \emph{light grey}) is a bad approximation for the
middlebox deployment problem (one node in center, in \emph{darker grey}).}
\label{fig:worst-case}
\end{figure}

%More generally, MMDP differs from classic covering problems as chains with
%multiple (i.e., $k>1$) functions need to be embedded.
%Hence, we have to take care to the capacity constraints of multiply nodes at the same time.
%As we will see later,
%our results can also be extended to communicating \emph{pairs}, rather than pairs,
%as well as to stretch metrics rather than absolute distances.

While our discussion revolved around constraints 
on the length $\ell$, similar arguments and bounds also
apply to the \emph{stretch}, the main focus of this paper:
the ratio of the length of the path through a middlebox, 
and the length of the shortest path between the communicating pair.
To see this, simply modify the example in
Figure~\ref{fig:worst-case} by adding a direct edge between 
each communicating pair $s_i$ and $t_i$. Let $c=\ell$. Then the length of 
the shortest path between each pair is one. 
The dominating nodes must be at distance at most $c/2$ from the communicating nodes. Consequently, the cardinality of the dominating set is 
$\Omega(n/c)$, i.e., $\Omega(n)$ for constant $c$;
deploying one single middlebox $m$ at the center and routing the communication 
between each pair through $m$ results in paths of stretch~$c$.

\begin{figure}[t]
\centering
\includegraphics[width=0.8\columnwidth]{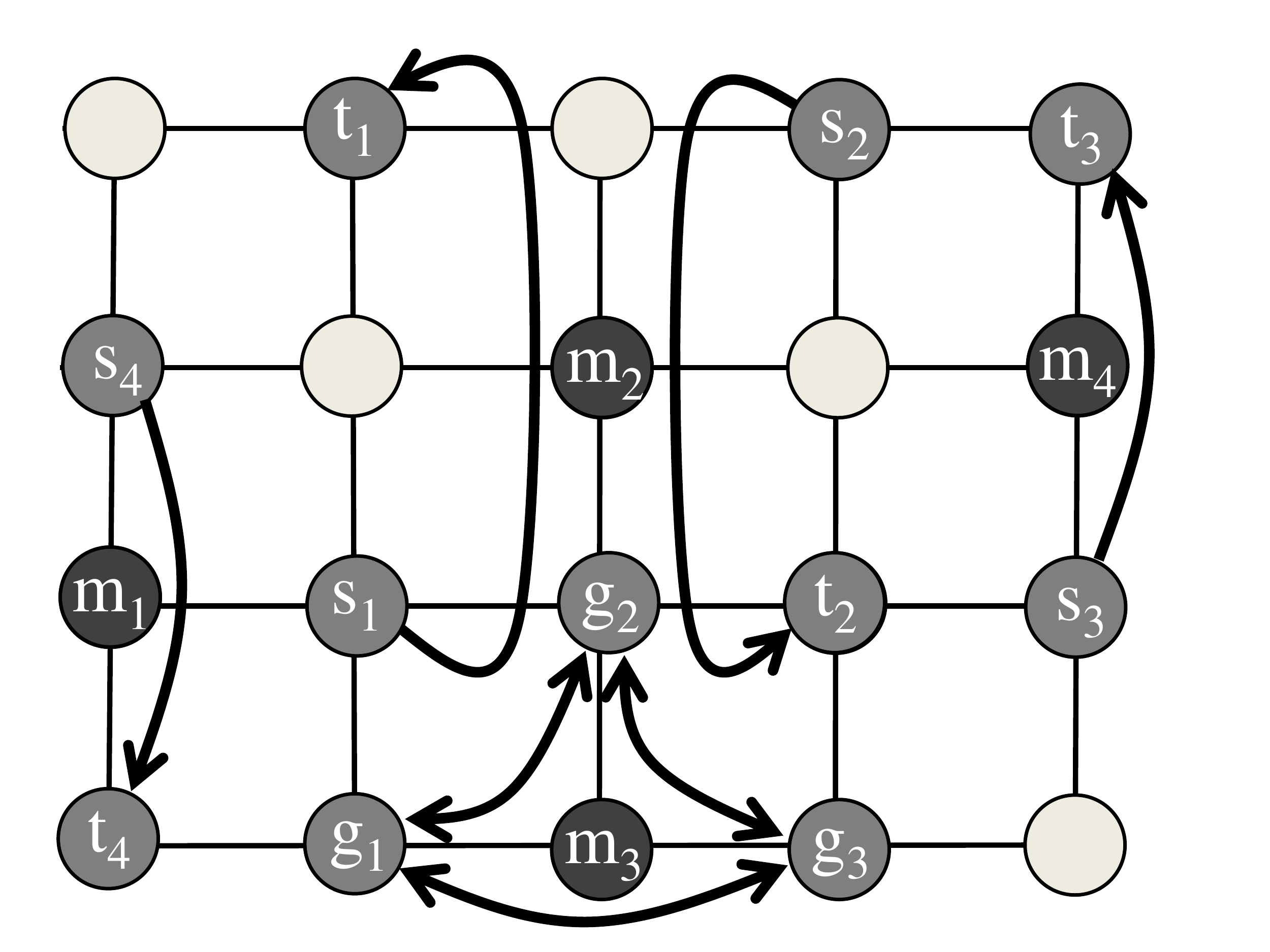}
\caption{Example: The communication between node pairs $(s_i,t_i)$, for $i\in \{1,2,3,4\}$,
as well as between the group of nodes $\{g_1,g_2,g_3\}$
needs to be routed via a network function resp.~middlebox $M$.
Due to capacity constraints and constraints on the route length,
$M$ is instantiated at four locations $\{m_1,m_2,m_3,m_4\}$ in this example.}\label{fig:example}
\end{figure}

\subsection{Organization}

The remainder of this paper is organized as follows.
Section~\ref{sec:model} introduces our model and
discusses different use cases.
In Section~\ref{sec:approximation} we present
our approximation algorithm together with its analysis.
 Section~\ref{sec:groups} extends our study
 to weighted and group models. 
After reporting on our simulation results in
 Section~\ref{sec:eval}, we conclude our
 contribution in  Section~\ref{sec:summary}.

\section{Model and Use Cases}\label{sec:model}

\subsection{Formal Model}

We model the computer network as a graph connecting a set $V$ of $n=|V|$
nodes.
The input to our problem is a set of communicating node
pairs $P$: the route of each
node pair $(s,t)\in P$ needs to traverse an instance
of a middlebox resp.~network function (e.g., a firewall).
Node pairs
do not have to be disjoint: nodes can participate in many
communications simultaneously.

For the sake of generality, we assume that middleboxes can only be installed
on a subset of nodes $U\subseteq V$.
We will refer to the set of middleboxes locations (and equivalently,
the set of middleboxes instances) by $M$, and
we are interested in deploying a minimal number of middleboxes
at legal locations $M\subseteq U$
such that:
\begin{enumerate}
\item Each pair $p=(s,t)\in P$ is assigned
to an instance $m \in M$, denoted by $m=\mu(p)$.
\item
For each pair $p=(s,t)\in P$, there is a route from $s$ via $m=\mu(p)$
to $t$ of length at most $\rho \cdot d(s,t)$, i.e., $d(s,m)+d(m,t)\leq \rho \cdot d(s,t)$,
where $d(u,v)$ denotes the length of the shortest path between
nodes $u,v\in V$ in the network $G$, and where $\rho\geq 1$ is called the \emph{stretch}. For example, a stretch $\rho = 1.2$ implies that the distance (and the latency accordingly) when routing via a middlebox is at most 20\% more than the distance (and the latency accordingly) when connecting directly.
Our approach supports many alternative constraints, e.g., on the maximal route length. 
\item Capacities are respected: at most $\capa$ node pairs can
be served by any middlebox instance.
\end{enumerate}

Our objective is to
minimize the number of required middlebox instances, subject to the above constraints.

In this paper, we will also initiate the study of a weighted model,
where different communication pairs have different demands,
as well as a group model, where network functions need to serve entire groups of 
communication partners. 
See Figure~\ref{fig:example} for an example.
 
\subsection{Use Cases}

Let us give three concrete examples motivating
our formal model.

\noindent\textbf{Middlebox Deployment.}
Our model is mainly
motivated by the middlebox placement 
flexibilities introduced in network function virtualized and software-defined
networks. 
Deploying additional middleboxes, network processors
or so-called ``universal nodes'' can be costly, and a good tradeoff should
be found between deployment cost and routing efficiency.
For example, today, network policies can often be defined in 
terms of adjacency matrices or big switch abstractions, 
specifying which traffic is allowed between an ingress
port $s$ and an outgress network port $t$. In order to enforce
such a policy, traffic from $s$ to $t$ needs to traverse a middlebox instance 
inspecting and classifying the 
flows. The location of every middlebox can be optimized, but is subject to
the constraint that the route from $s$ to $t$ via the middlebox
should not be much longer than the shortest path from $s$ to~$t$.

\noindent \textbf{Deploying Hybrid Software-Defined Networks.}
There is a wide consensus that the transition of existing
networks to SDN will not be instantaneous,
and that SDN technology will be deployed incrementally, for cost reasons
and to gain confidence. 
The incremental OpenFlow switch deployment
problem can be solved using waypointing (routing flows via OpenFlow switches); our paper solves the algorithmic problem in the Panopticon~\cite{panopticon}
system.

\noindent\textbf{Distributed Cloud Computing.}
The first two use cases discussed above come with
per-pair requirements: each communicating pair
must traverse at least one function. However, there are also
scenarios where entire groups $G_i$ of nodes need
to share a waypoint:
for example, a multiplexer of a group communication (a multi-media conference) or a shared object in a distributed system, e.g., a shared and collaborative editor: 
An example could be a distributed
cloud application: imagine a set of users $G_i$ who
would like to use a collaborative editor application \`{a} la Google Docs.
The application should be hosted on a server which is
located close to the users, i.e., minimizing the latency
between user pairs.

\section{Approximation Algorithm}\label{sec:approximation}

\subsection{Overview}

This section presents a deterministic and polynomial-time 
$O(\log(\min\{n,\kappa\}))$-approximation algorithm
for the middlebox deployment problem.
Our algorithm
is based on an efficient computation of a certain submodular set function:
it defines the maximum number of pairs which can be covered by a given set of middleboxes.
In a nutshell, the submodular function
is computed efficiently using an augmenting path method on a certain bipartite graph,
which also determines the corresponding assignment of communication pairs to the middleboxes.
The augmenting path algorithm is based on a simple,
linear-time breadth-first
graph traversal. The augmenting path method
is attractive and may be of independent interest: similarly to the flow-based
approaches in the literature~\cite{NaorN02}, it does not require changes to previously
deployed middleboxes, but removes the disadvantage of~\cite{NaorN02} that the set of served
communication pairs changes over time: an attractive property for \emph{incremental
deployments}.

Our solution with augmenting paths results (theoretically and practically) in significantly faster
computations of the submodular function value as well as of the corresponding assignment
compared to the
flow based approach.
In fact, computing all augmenting paths takes $O(|E| \min\{\sqrt{|V|}, \kappa\})$
by using the Hopcroft-Karp algorithm \cite{HopcroftKarp73} performing at most
$O(\min\{\sqrt{|V|}, \kappa\})$ breadth-first search and depth-first search traversals of the graph.
Computing a maximum flow takes $O(|V||E|^2)$ time by the Edmonds-Karp algorithm~\cite{EdmondsKarp72},
$O(|V|^3)$ time by the push-relabel algorithm of Goldberg and Tarjan~\cite{GoldbergTarjan88}, and
$O(|V||E|)$ time by the algorithm of Orlin~\cite{Orlin13} (see~\cite{GoldbergTarjan88} for an overview on efficient maximum flow algorithms).

Concretely, we can start with an empty set of
middlebox locations $M=\emptyset$,
and in each step, we add a middlebox at location $m$ to $M$, which maximizes the number of
totally covered (by middleboxes at locations $M\cup\{m\}$) communicating pairs, without violating capacity
and route stretch constraints.
The algorithm terminates, when all pairs were successfully
assigned to middlebox instances in $M$.
More precisely, for any set of middlebox instances $M\subseteq U$, we define
$\phi(M)$ to be the maximum number of communication pairs that can be assigned to
$M$, given the capacity constraints at the nodes and the route stretch constraints.
We show that $\phi$ is  non-decreasing and submodular, and $\phi(M)$
-- and the corresponding assignment of pairs to middlebox instances in $M$ -- can be computed
in polynomial time.
This allows us to use Wolsey's Theorem~\cite{wolsey-submodular}
to prove an approximation factor of $1+O(\log \phi_{\max})$, where $\phi_{\max} = \max_{m\in U} \phi(\{m\})$.
Since in our case, $\phi_{\max}=\min\{\kappa, |P|\}$ and $|P|\leq n^2$,
this implies that
Algorithm \ref{alg:augment} computes an $O(\log (\min\{\kappa,n\}))$-approximation for the minimum
number of middlebox instances that can cover all pairs $P$
(i.e., all pairs can be assigned to the deployed middleboxes).

\subsection{Maximum Assignment}

In order to compute function $\phi(M)$, for any $M\subseteq U$, we construct a bipartite
graph $B(M)=(M \cup P, E)$,
%where $M$ is the set of
%middlebox instances ($M\subseteq U$),
where $P$ is the set of communicating pairs.
We will simply refer to the middlebox instances $m\in M$ and
pairs $p\in P$ in the bipartite graph as the \emph{nodes}.
The edge set $E$ connects middlebox instances $m\in M$ to those communicating pairs $p\in P$
which can be routed via $m$ without exceeding the stretch constraint,
i.e. $E=\{(m,p)\ :\ m\in M, p=(s,t)\in P, d(s,m)+d(m,t)\leq\rho\cdot d(s,t)\}$, where $d(u,v)$
denotes the length of shortest path between nodes $u$ and $v$ in the network.
For each $p$, the set of such middlebox nodes can be computed in a pre-processing step
by performing an all-pair shortest paths algorithm to calculate $d(u,v)$
for each $u,v$ in the network, and for each $p=(s,t)\in P$,
selecting the nodes $m\in M$ with $d(s,m)+d(m,t)\leq\rho\cdot d(s,t)$.

%We note that the middlebox nodes $m=(m_1,\ldots,m_k)\in U$ are essentially
%``super-nodes'', representing $k$ different nodes $v\in V$.

%The edges $E$ essentially represent \emph{possible} assignments of
%communicating pairs to functions.
A \emph{partial assignment} $A(M)\subseteq B(M)$ of pairs $p\in P$ to middlebox instances in $M$
is a subgraph of $B(M)$, in which each $p\in P$ is connected to at most one middlebox
$m\in M$ by an edge, i.e., $\deg_{A(M)}(p)\leq 1$ where $\deg$ denotes the degree.
A pair $p\in P$ with $\deg_{A(M)}(p)=1$ is called an \emph{assigned pair} and with
$\deg_{A(M)}(p)=0$ an \emph{unassigned pair} of a \emph{free pair}.
A partial assignment $A(M)$ without free pairs is called an \emph{assignment}.
The \emph{size} $|A(M)|$ of a (partial) assignment $A(M)$ is defined as
the number of edges in $A(M)$.

Our goal is to compute a partial assignment $A(M)$ of pairs $p\in P$ to middlebox
instances in $M$
maximizing the number of assigned pairs.
Accordingly, we distinguish between \emph{assignment edges} $E_A$ and
\emph{non-assignment edges} $E_{\overline{A}}$,
where $E_A\cup E_{\overline{A}}=E$ is a partition of the edge set $E$ of $B(M)$.

Our algorithm ensures that at any moment of time, the partial assignments are \emph{feasible},
i.e., the assignment fulfills the following capacity constraints.
The \emph{current load} of a middlebox $m$ in $M$, denoted by $\lambda(m)$, is
the number of communicating pairs served by $m$ according to the
current partial assignment $A(M)$.
Moreover, we define the \emph{free capacity} $\capa^*(m)$ of $m$ to be
$\capa^*(m)=\capa-\lambda(m)$.
A (partial) assignment $A(M)$ is feasible if and only if it does not violate %free
capacities, i.e., $\lambda(v)\leq \capa$, for all $v$ in any middlebox in $M$.

\begin{figure}[t]
\centering
\includegraphics[width=0.75\columnwidth]{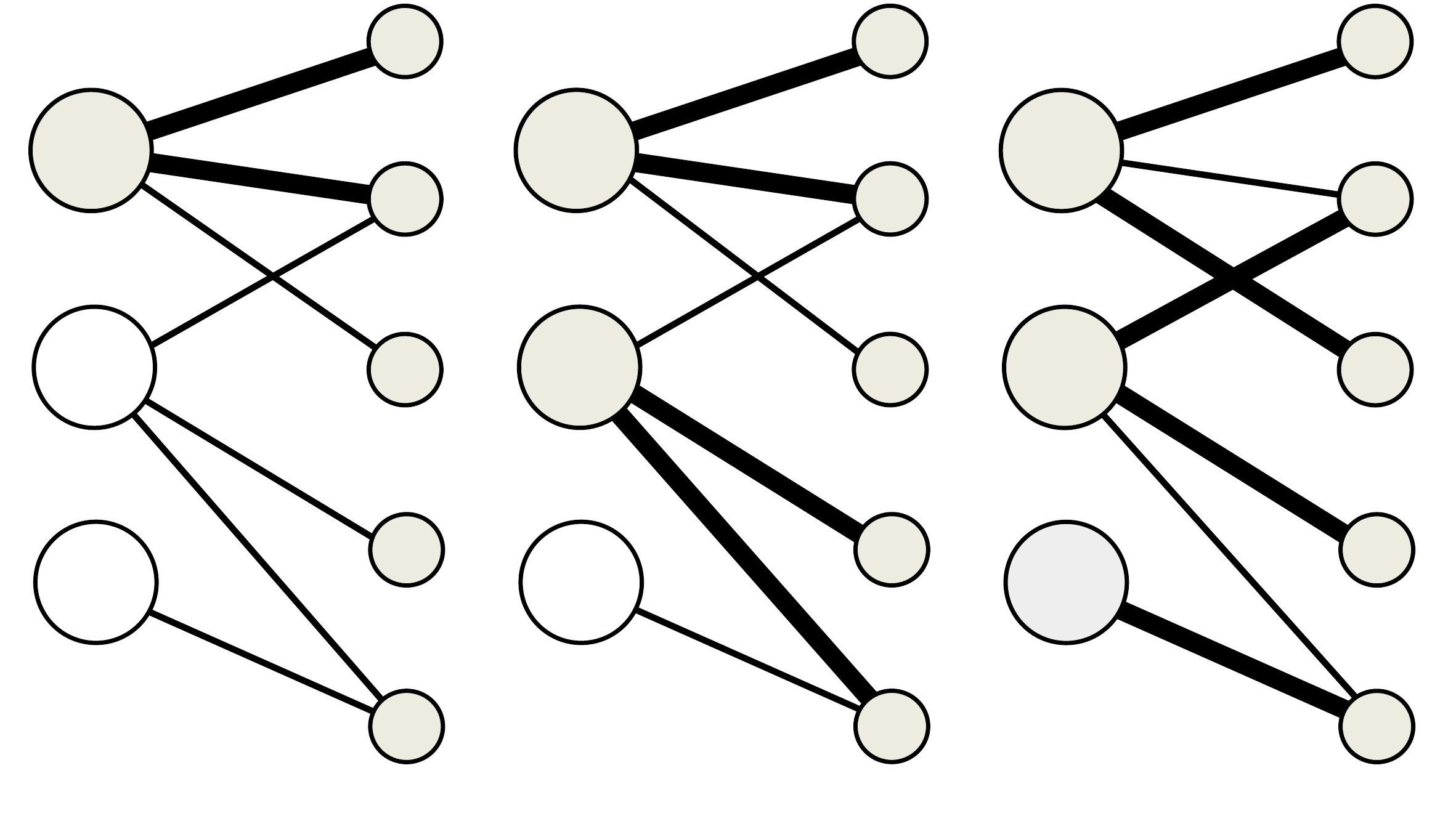}
\caption{
Bipartite graph with possible middlebox locations $U$ (empty: \emph{white}, deployed: \emph{grey}) on the left side and
pairs $P$ on the right side of the bipartite graph. The capacity of the middleboxes $\kappa=2$.
Middleboxes are deployed greedily, one-by-one, without
requiring relocation of previously mapped middleboxes. 
Assignment edges are indicated in \emph{bold}.
The deployment of the first middlebox \emph{left} creates two assigment edges incident to the middlebox. 
The deployment of the second middlebox \emph{middle} involves two new assigment edges incident to the second middlebox. Lastly, the botttom middlebox is used to serve the bottom pair, which was previously served by the middle middlebox.}\label{fig:greedy-steps}
\end{figure}

\begin{figure}[t]
\centering
\includegraphics[width=0.75\columnwidth]{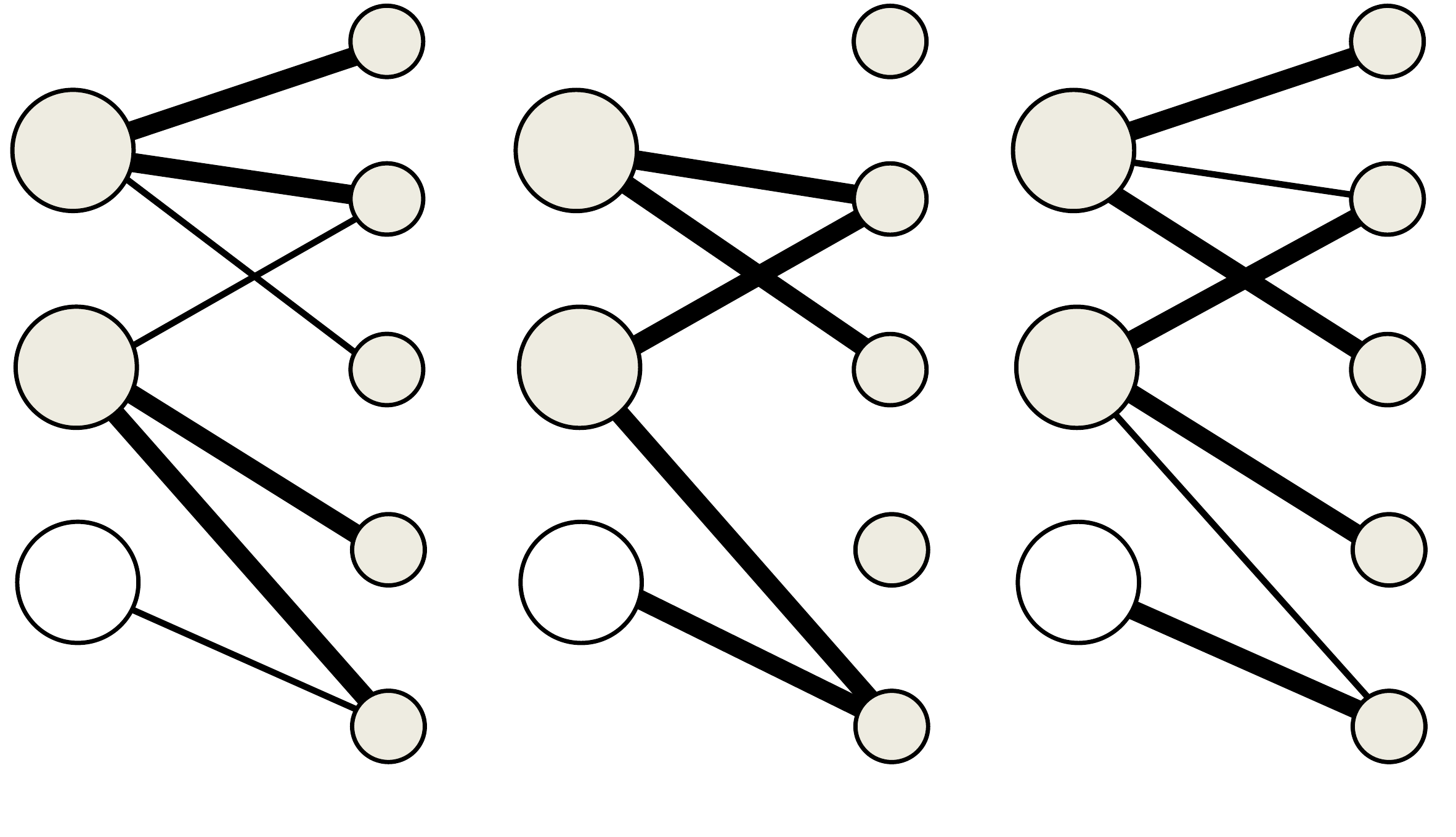}\\
\caption{Illustration of augmenting path computation. Bipartite graph with possible middlebox locations $U$ (empty: \emph{white}, deployed: \emph{grey}) on the left side and pairs $P$ on the right side of the bipartite graph.
The capacity of the middleboxes $\kappa=2$.
Assignment edges are indicated in \emph{bold}.
\emph{Left:} The assignment before deploying the third middlebox. \emph{Middle:} Augmenting path starting at the new middlebox. \emph{Right}: The resulting new assignment.}\label{fig:aug-path}
\end{figure}

In order to compute the integer function $\phi(M)$, which is essentially the cardinality of a maximum feasible partial assignment $A^*(M)$,
we make use of augmenting paths.
 Let $A(M)$ be a feasible partial assignment.
An \emph{augmenting path} $\pi=(v_1,v_2,\ldots,v_{j})$ relative to $A(M)$
in $B(M)$ starts at a middlebox $m\in M$ with free capacity, ends at a free pair $p\in P$, and
alternates between assignment edges and non-assignment edges, i.e.,
\begin{enumerate}
\item $v_1\in M$ with $\capa^*(v_1)>0$ and $v_j\in P$ with $\deg_{A(M)}(v_j)=0$,
\item  $(v_i,v_{i+1}) \in B(M)\setminus A(M)$, for any odd $i$, %are non-assignment edges
\item  $(v_i,v_{i+1})\in A(M)$, for any even $i$. %, are assignment edges
\end{enumerate}

An augmenting path relative to $A(M)$ is a witness for a better partial assignment:
The symmetric difference $A'(M) = (A(M)\setminus \pi) \cup (\pi \setminus A(M))$
is also a (partial) assignment with size $|A'(M)| = |A(M)| + 1$.
Due to the properties of the augmenting path,
by this reassignment, one additional pair will be covered by the same set of middlebox instances,
without violating node capacities: in $A(M)$, the first node had free capacity and the
last node represents a free pair.
Furthermore, the degree at each internal node of $\pi$ remains unchanged, since
one incident assignment edge gets unassigned and one incident unassigned edge gets assigned.
Therefore, the load of the internal nodes in $\pi$ remains unchanged.
Conversely, suppose that a partial assignment $A(M)$ is not a maximum partial assignment.
Let $A^*(M)$ be a maximum partial assignment.
Consider the bipartite graph $X=(A(M)\cup A^*(M))\setminus (A(M)\cap A^*(M))$.

Figure~\ref{fig:greedy-steps} illustrates the greedy placement
strategy and Figure~\ref{fig:aug-path} shows the augmenting
path computation.

Note that an augmenting path must always exist for suboptimal
 covers. To see this, consider the following reduction to matching:
 replace each middlebox node with $\kappa$ many clones of capacity 1,
 and assign each pair $p\in P$ to the clones in the canonical way.
 Feasible assignments constitute a matching in this graph,
 with node degrees one. Given any suboptimal cover, we find a witness for a augmenting
  path as follows. We take the symmetric difference of the suboptimal and the optimal
  solution, which gives us a set of paths and cycles of even length. Thus,
  a path must exist where the optimal solution has one additional edge.

Augmenting paths can be computed efficiently, by simply performing
breadth-first searches in $B(M)$, by using the nodes $m\in M$ with free capacities as the starting nodes.
%These nodes form the first layer of the partitioning.
%At the first level of the search, we use the non-assignment edges.
%At subsequent levels of the search, the traversed edges are required to alternate
%between non-assignment and assignment edges.
%That is, when searching for successors of a node $m\in M$,
%only non-assignment edges are traversed,
%while when searching for successors of a node $p\in P$,
%only assignment edges are traversed.
%The search terminates at the first layer, where one or more free nodes in $P$ are reached.
%Then a path from a node $m\in M$ with free capacity to a free node $p\in P$ through
%the breadth-first search layering is an augmenting path.
%We obtain such a path by starting a depth-first search from a reached free node $p\in P$.

\subsection{Submodularity}

The set function $\phi: 2^{U} \rightarrow \mathbb{N}$ is called \emph{non-decreasing}
iff $\phi(U_1) \leq \phi(U_2)$ for all $U_1 \subseteq U_2 \in 2^U$,
and \emph{submodular} iff $\phi(U_1)+\phi(U_2) \geq \phi(U_1 \cap U_2) + \phi(U_1 \cup U_2)$
for all $U_1, U_2 \in 2^U$. Equivalently, submodularity can be defined
as follows (See, e.g.~in \cite{Schrijver03} pp.~766): 
for any $U_1,U_2 \subseteq U$ with $U_1\subseteq U_2$ and every $u\in U\setminus U_2$,
we have that $\phi(U_1\cup \{u\})-\phi(U_1)\geq \phi(U_2\cup\{u\})-\phi(U_2)$.
This is in turn equivalent to: for every $U_1\subseteq U$ and $u_1,u_2 \in U \setminus U_1$
we have that $\phi(U_1\cup\{u_1\})+\phi(U_1\cup\{u_2\})\geq \phi(U_1\cup\{u_1,u_2\})+\phi(U_1)$.

Let $U_1 \subseteq U_2$ be two arbitrary subsets of $U$.
Consider a maximum assignment $A(U_1)$ for $U_1$.
Let $A(U_2)$ be a maximum assignment for $U
_2$ obtained from $A(U_1)$ by
adding the members $u_2 \in U_2 \setminus U
_1$ to $U_1$ one-by-one, and
performing the augmenting path method until we have a maximum assignment
for the incremented set $U_1 \cup \{u_2\}$.
Let $A^{\Pi}(U
_1)$ be the projection of $A(U_2)$ to $U
_1$, i.e., for each $u_1 \in U_1$, $p \in P$,
the pair $p$
is assigned to $u_1$ in $A^{\Pi}(U
_1)$ if and only if $p$ is assigned to $u_1$ in $A(U_2)$.
First we show that $A^{\Pi}(U_1)$ is a maximum assignment for $U_1$.

\begin{lemma}\label{lemma:helper}
Let $A(U_1)$ be a maximum assignment for $U_1$.
Let $A(U_2)$ be a maximum assignment for $U
_2$ obtained from $A(U_1)$ by
adding all $u \in U_2 \setminus U_1$ to $U_1$ one-by-one, and
performing the augmenting path method
until we have a maximum assignment for $U_1 \cup \{u\}$.
Let $A^{\Pi}(U
_1)$ be the projection of $A(U_2)$ to~$U_1$.
Then $A^{\Pi}(U
_1)$ is a maximum assignment for~$U_1$.
\end{lemma}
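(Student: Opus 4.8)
The plan is to reduce the lemma to the single claim that $|A^{\Pi}(U_1)| = |A(U_1)|$. Indeed $A^{\Pi}(U_1)$ is a feasible partial assignment for $U_1$: it arises from the feasible assignment $A(U_2)$ by deleting all edges incident to middleboxes in $U_2\setminus U_1$, so every pair is still assigned to at most one middlebox in $U_1$ and no capacity is exceeded. Since $A(U_1)$ is a maximum assignment for $U_1$ we automatically have $|A^{\Pi}(U_1)| \le |A(U_1)|$, and if equality holds then $A^{\Pi}(U_1)$, being a feasible assignment of the maximum possible size, is itself a maximum assignment for $U_1$. So the whole argument is devoted to the reverse inequality $|A^{\Pi}(U_1)| \ge |A(U_1)|$.

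First I would set up an induction along the chain $U_1 = W_0 \subsetneq W_1 \subsetneq \dots \subsetneq W_r = U_2$, where $W_{i+1} = W_i \cup \{u_{i+1}\}$ is obtained by inserting the $(i+1)$-st element of $U_2\setminus U_1$, and $A_i$ denotes the maximum assignment for $W_i$ produced by the procedure, so $A_0 = A(U_1)$ and $A_r = A(U_2)$. The invariant I want to maintain is that the restriction of $A_i$ to the middleboxes of $U_1$ has size $|A(U_1)|$, equivalently is a maximum assignment for $U_1$. The base case $i = 0$ is trivial, and the inductive step analyses a single ``phase'': passing from $A_i$ to $A_{i+1}$ by inserting $u_{i+1}$ and running the augmenting-path method.

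The step I expect to be the main obstacle is the structural claim that \emph{every augmenting path used in a phase starts at the freshly inserted middlebox $u_{i+1}$}. I would prove this by a secondary induction on the number of augmentations already carried out in the phase, with the auxiliary invariant that the restriction of the current assignment to $W_i$ stays a maximum assignment for $W_i$; this holds at the start of the phase because $A_i$ is maximum for $W_i$ and $u_{i+1}$ has no assignment edge yet, so any augmenting path through $u_{i+1}$ must start at it while any augmenting path avoiding $u_{i+1}$ would already exist in $B(W_i)$. For the step, if some augmenting path $\pi$ instead started at a free-capacity middlebox $m \in W_i$, then the prefix of $\pi$ ending just before its first visit to $u_{i+1}$ (or all of $\pi$ if $u_{i+1}\notin\pi$) would be an alternating walk inside $B(W_i)$ from $m$ to a pair that is free in the $W_i$-restriction — free because that pair is either unassigned in the current assignment or assigned to $u_{i+1}\notin W_i$ — hence an augmenting path for the $W_i$-restriction, contradicting its maximality.

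Finally, knowing $\pi = (u_{i+1}, v_2, \dots, v_j)$ starts at $u_{i+1}$, augmenting along $\pi$ shifts every internal pair one middlebox back along $\pi$ and assigns the formerly free endpoint pair $v_j$ to the last middlebox of $\pi$; $u_{i+1}$ has free capacity beforehand and all other middlebox loads are unchanged, so feasibility is preserved. A short telescoping argument then shows that for \emph{any} middlebox set $S$ with $u_{i+1}\notin S$ the number of pairs assigned to $S$-middleboxes is unchanged by this augmentation. Applying this with $S = W_i$ keeps the auxiliary invariant alive (so the secondary induction goes through), and applying it with $S = U_1$ keeps the restriction of the current assignment to $U_1$ at size $|A(U_1)|$, which closes the primary inductive step. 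At the end of the last phase we obtain that $A^{\Pi}(U_1)$ has size $|A(U_1)|$, as required.
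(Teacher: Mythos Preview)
Your proof is correct and follows the same core idea as the paper: each augmentation leaves the load of every middlebox in $U_1$ unchanged, so the projection $A^{\Pi}(U_1)$ has exactly $|A(U_1)|$ edges and is therefore maximum. You are in fact more careful than the paper in one respect --- you explicitly prove, via the secondary induction maintaining maximality of the $W_i$-restriction, that every augmenting path in a phase must start at the freshly inserted middlebox $u_{i+1}$, whereas the paper tacitly assumes this when it asserts that every $u_1\in U_1$ appears only as an internal node of an augmenting path.
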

\begin{proof}
By adding $u_2 \in U_2 \setminus U
_1$ to $U_1$ and
performing the augmenting path method until an augmenting path exists
(i.e., until we obtain a maximum assignment for $U_1 \cup \{u_2\}$), it holds that,
for each $u_1 \in U
_1$, the number of pairs assigned to $u_1$ does not change:
along the augmenting path each internal node has one incident assignment
edge and one non-assignment edge.
This also holds after exchanging the assignment edges and the non-assignment edges, i.e.,
each assignment edge on the augmenting path becomes a non-assignment edge and vice versa.
The degrees of the start and end nodes increases by one.
Consequently, the degree of each $u_1 \in U_1$ in $A^{\Pi}(U_1)$ is the same as in $A(U_1)$.
Since $A(U_1)$ is a maximum assignment and each $u_1$ has the same degree in
$A^{\Pi}(U_1)$, $A^{\Pi}(U_1)$ must be also a maximum assignment.
\end{proof}

\begin{theorem}\label{thm:submod}
Let $U$ be the set of all possible middlebox instance locations.
Let $\phi : 2^{U} \rightarrow \mathbb{N}$ be the set
function, such that for $M
\subseteq U$, $\phi(M)$ is
the maximum number of pairs in $P$ that can be assigned to $M$
without violating the capacity constraints. Then $\phi$ is submodular.
\end{theorem}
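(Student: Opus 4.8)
The plan is to verify the diminishing-returns form of submodularity stated just above: for arbitrary $U_1 \subseteq U_2 \subseteq U$ and every $u \in U \setminus U_2$, we must show $\phi(U_1 \cup \{u\}) - \phi(U_1) \geq \phi(U_2 \cup \{u\}) - \phi(U_2)$. The whole argument reduces to bookkeeping of a single quantity, namely the load $\lambda(u)$ of the newly added middlebox $u$ inside one carefully chosen maximum assignment, tracked as we enlarge the middlebox set in stages. The point is to \emph{not} compare two independently computed maximum assignments of $U_2 \cup \{u\}$ --- their $u$-loads can genuinely differ --- but to build a single assignment that serves both sides of the inequality.

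First I would fix a maximum assignment $A(U_1)$ for $U_1$ and grow it in two phases, each time by the augmenting-path method. Phase~1: insert $u$ and augment to optimality, obtaining a maximum assignment $A(U_1 \cup \{u\})$. Applying Lemma~\ref{lemma:helper} to the chain $U_1 \subseteq U_1 \cup \{u\}$, the projection $A^{\Pi}(U_1)$ is again maximum for $U_1$ and every $u_1 \in U_1$ keeps its degree; hence the restriction has size $\phi(U_1)$ and exactly $\phi(U_1 \cup \{u\}) - \phi(U_1)$ pairs are assigned to $u$ in $A(U_1 \cup \{u\})$. Phase~2: insert the members of $U_2 \setminus U_1$ one by one, each followed by augmenting to optimality, ending at an assignment $A(U_2 \cup \{u\})$ that is maximum for $U_2 \cup \{u\}$ (a single insertion followed by augmenting to exhaustion of augmenting paths yields a maximum assignment, by the argument preceding Lemma~\ref{lemma:helper}). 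Throughout Phase~2 the node $u$ is an ``old'' node, so the same degree-preservation property invoked in Lemma~\ref{lemma:helper} leaves $\lambda(u)$ untouched; therefore in $A(U_2 \cup \{u\})$ the load of $u$ is still exactly $\phi(U_1 \cup \{u\}) - \phi(U_1)$.

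The final step closes the inequality. The restriction of $A(U_2 \cup \{u\})$ to $U_2$ --- delete $u$ together with its incident assignment edges --- is a feasible partial assignment for $U_2$, hence has size at most $\phi(U_2)$; crucially, it need \emph{not} be maximum for $U_2$, and that slack is exactly what turns a would-be equality into the desired inequality. Since $|A(U_2 \cup \{u\})| = \phi(U_2 \cup \{u\})$, the number of pairs assigned to $u$ equals $\phi(U_2 \cup \{u\})$ minus the size of that restriction, which is at least $\phi(U_2 \cup \{u\}) - \phi(U_2)$. Combining with the value computed in Phase~1 gives $\phi(U_1 \cup \{u\}) - \phi(U_1) \geq \phi(U_2 \cup \{u\}) - \phi(U_2)$; as $U_1, U_2, u$ were arbitrary, $\phi$ is submodular. (Monotonicity, if wanted, is immediate: discarding $u$ shows $\phi$ is non-decreasing.)

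The main obstacle I expect is conceptual rather than computational: recognizing that the assignment one works with must be \emph{simultaneously} (a) reachable from a maximum assignment of $U_1$ by augmenting-path insertions --- so that Lemma~\ref{lemma:helper} pins down the $U_1$-side difference exactly --- and (b) globally maximum for $U_2 \cup \{u\}$ --- so that mere feasibility of its restriction to $U_2$ bounds the $U_2$-side difference from above. The only supporting facts needed, that ``insert one middlebox, then augment to optimality'' produces a maximum assignment and leaves all previously present middleboxes at their former loads, are precisely the content of the discussion around Lemma~\ref{lemma:helper}, so I would cite them rather than re-derive them. As a sanity check one may note the clone/transversal-matroid viewpoint sketched earlier: $\phi(M)$ is the rank of $\bigcup_{m\in M}\{m^1,\dots,m^{\kappa}\}$ in the transversal matroid of the bipartite graph, and since the clone map is additive on unions and intersections, submodularity of matroid rank transfers to $\phi$; the augmenting-path proof above is the self-contained, constructive version on which the rest of the paper relies.
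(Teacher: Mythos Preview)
Your proof is correct and uses essentially the same machinery as the paper --- Lemma~\ref{lemma:helper}'s degree-preservation under augmenting-path insertions, plus feasibility of a restriction to bound the other side. The one difference is the insertion order: the paper grows $M_1 \to M_2 \to M_2 \cup \{m\}$, so the projection to $M_2$ is maximum and the \emph{equality} $\phi(M_2 \cup \{m\}) - \phi(M_2) = |A^{\Pi}(m)|$ sits on the larger set while the \emph{inequality} (from mere feasibility of $A^{\Pi}(M_1)\cup A^{\Pi}(m)$) sits on the smaller one; you instead grow $U_1 \to U_1 \cup \{u\} \to U_2 \cup \{u\}$, which swaps the roles --- your equality is on the $U_1$ side and your inequality on the $U_2$ side. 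Either order works; the two arguments are mirror images of each other.
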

\begin{proof}
We show that
for all $M
_1,M_2 \subseteq U$,  $M_1 \subseteq M
_2$ and for all
$m\in U \setminus M_2$ we have that
\begin{eqnarray}
\phi(M
_1 \cup \{m\}) - \phi(M_1) \geq \phi(M_2 \cup \{m
\}) - \phi(M_2).
\end{eqnarray}
This is equivalent to the definition of the submodularity of $\phi$.

%\textbf{FIXME: Vielleicht ein Bild, was meinst Du Tamas?}

Consider a maximum assignment $A(M
_1)$ for $M_1$
and a maximum assignment $A(M_2)$ for $M_2$ obtained from $A(M_1)$ as described in Lemma~\ref{lemma:helper}.
Now we add $m$ to $M_2$.
Let $A(M_2 \cup \{m\})$ be the maximum assignment for $M_2 \cup \{m\}$ obtained from $A(M_2)$
by adding $m$ to $M_2$ and
performing the augmenting path method until we have a maximum assignment for $M_2 \cup \{m\}$.
Let $A^{\Pi}(M_2)$ be the projection of $A(M_2 \cup \{u\})$ to $M_2$ and
let $A^{\Pi}(M_1)$ be the projection of $A(M_2 \cup \{u\})$ to $M_1$.
Let $|A^{\Pi}(M_2)|$ and $|A^{\Pi}(M_1)|$ be the number of assigned pairs of $P$ in the assignments.
By Lemma~\ref{lemma:helper}, $A^{\Pi}(M_2)$ is a maximum assignment for $M_2$ and $A^{\Pi}(M_1)$
is a maximum assignment for $M_1$. Therefore, $\phi(M_2) = |A^{\Pi}(M_2)|$
and $\phi(M_1) = |A^{\Pi}(M_1)|$.
Furthermore, $\phi(M_2 \cup \{u
\}) = |A(M_2 \cup \{u\})|$.

Consider the pairs $P_{m} \subseteq P$ assigned to $m$
in the assignment $A(M_2 \cup \{m\})$. Let $A^{\Pi}(m)$
be the projection of $A(M_2 \cup \{m\})$
to $m$. Then $|A^{\Pi}(m)| = |P_{m
}|$.
Since $A^{\Pi}(M_2)$ contains all elements
that are assigned to any element of $M_2$ in
$A(M_2 \cup \{m\})$, clearly
$|A(M_2 \cup \{m\})| = |A^{\Pi}(M_2)| + |A^{\Pi}(m)|$, and thus
\begin{eqnarray}
\phi(M_2 \cup \{m\}) - \phi(M_2) = |A^{\Pi}(m)|.\label{eqsubmod01}
\end{eqnarray}
On the other side, the assignment $A^*(M_1 \cup \{m\})$ which is obtained as the union of
$A^{\Pi}(M_1)$ and
$A^{\Pi}(m)$ is a valid assignment for $M_1 \cup \{m\}$. Therefore,
$|A(M_1 \cup \{m\})| \geq |A^*(M_1 \cup \{m\})| = |A^{\Pi}(M_1)| + |A^{\Pi}(m)|$,
and thus
\begin{eqnarray}
\phi(M_1 \cup \{m\}) - \phi(M_1) \geq |A^{\Pi}(m)|.\label{eqsubmod02}
\end{eqnarray}
Using (\ref{eqsubmod01}) and (\ref{eqsubmod02}) we obtain
\begin{eqnarray}
\phi(M_1 \cup \{m\}) - \phi(M_1) \geq \phi(M_2 \cup \{m\}) - \phi(M_2)\,,
\end{eqnarray}
completing our proof.
\end{proof}

\subsection{The Algorithm}

Essentially, Algorithm 1 starts with an empty set $M$
and cycles through the possible middlebox locations $m\in U\setminus M$,
always deploying the middlebox resulting (with the already deployed
ones) in the highest function value $\phi$.

Given the submodularity and the augmenting path construction,
we have derived our main result.
Per middlebox, an augmenting paths problem is solved. 
Using the \emph{Hopcroft-Karp algorithm}, we can compute
all (at most $\kappa$ many) augmenting paths starting at a newly added middlebox
in time $O(\min\{\kappa,\sqrt{|V|}\}\cdot |E|)$, where $|V|$ denotes the number of nodes
and $|E|$ the number of edges in $B(M)$.

\begin{algorithm}[t]
    \caption{Greedy Algorithm}
    \label{alg:augment}
    
    \begin{algorithmic}[1]
			\STATE \textbf{init} $M\gets \emptyset$, $A(M)\gets$ empty assignment
            \STATE \textbf{while} $A(M)$ is not a feasible assignment \textbf{do}
    		\STATE ~~~\textbf{init} $m^* \gets \emptyset$, $opt \gets 0$, $tmp \gets 0$
 			\STATE ~~~\textbf{for each} $m\in U\setminus M$
 			\STATEx ~~~~~~\emph{(* compute all augmenting paths *)}
 			\STATE ~~~~~~~$tmp \gets \phi(M
  \cup \{m\}) - \phi(M)$ 
 			\STATE ~~~~~~~\textbf{if} $tmp>opt$ \textbf{then}
 			\STATE ~~~~~~~~~~~$opt\gets tmp$, $m^*\gets m$
	        \STATE ~~~~~~~\textbf{end if}
            \STATE ~~~\textbf{end for}
            \STATE ~~~$M
            \gets M\cup\{m^*\}$, \textbf{update} $A(M)$
			\STATE \textbf{end while}
    \end{algorithmic}
\end{algorithm}

\begin{theorem}\label{thm:log-approx}
Our greedy and incremental middlebox deployment algorithm computes a $O(\log{n})$-approximation.
%The overall runtime is quadratic in the network size.
\end{theorem}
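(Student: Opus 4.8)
The plan is to cast the minimum middlebox deployment problem as an instance of submodular set cover and invoke Wolsey's greedy analysis~\cite{wolsey-submodular}. First I would observe that a set $M \subseteq U$ is a feasible deployment (all pairs assignable subject to capacities and stretch) if and only if $\phi(M) = |P|$, that $\phi(\emptyset) = 0$, that $\phi$ is integer-valued and non-decreasing, and that by Theorem~\ref{thm:submod} it is submodular. These are exactly the hypotheses under which Wolsey's theorem bounds the greedy procedure that repeatedly adds the element of largest marginal value.

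Next I would check that Algorithm~\ref{alg:augment} \emph{is} this greedy procedure: in each iteration it selects $m^\ast \in U \setminus M$ maximizing $\phi(M \cup \{m\}) - \phi(M)$, and this marginal value is computed correctly and in polynomial time by the augmenting-path method — each augmenting path starting at the newly added middlebox raises the size of the maximum feasible partial assignment by exactly one, there are at most $\min\{\kappa,|P|\}$ of them, and a Hopcroft--Karp style search yields the marginal gain in time $O(\min\{\kappa,\sqrt{|V|}\}\cdot|E|)$ on $B(M\cup\{m\})$. Since $\phi$ is non-decreasing and strictly increases whenever $\phi(M) < |P|$ (any feasible super-solution witnesses a middlebox of positive marginal gain), the loop terminates after at most $|U| \le n$ iterations; the incremental property is immediate from the augmenting-path construction, as previously placed middleboxes are never relocated and previously served pairs are only re-routed, never dropped.

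Then I would apply Wolsey's theorem directly: the greedy solution has size at most $(1 + \ln \phi_{\max}) \cdot \mathrm{OPT}$, where $\phi_{\max} = \max_{m \in U} \phi(\{m\})$ is the largest value $\phi$ attains on a singleton. The capacity constraint forces a single middlebox to serve at most $\kappa$ pairs, and trivially at most $|P|$ pairs total, so $\phi_{\max} \le \min\{\kappa, |P|\}$. Using $|P| \le n^2$ together with $\ln n^2 = 2\ln n$, we obtain $1 + \ln \phi_{\max} = O(\log \min\{\kappa, n\}) = O(\log n)$, which is the claimed approximation factor.

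The point that needs the most care — the "hard part" — is justifying that the augmenting-path computation of the marginal gain is exactly the quantity Wolsey's greedy requires, i.e. that the partial assignments carried across iterations remain \emph{maximum} assignments rather than merely feasible lower bounds on $\phi$; this is precisely where Lemma~\ref{lemma:helper} and the symmetric-difference argument for augmenting paths do the real work, and it is what lets us write $\phi(M)=|A(M)|$ at every step. Everything else — verifying $\phi(\emptyset)=0$, monotonicity, termination in $\le n$ steps, and the numeric estimate $\phi_{\max}\le\min\{\kappa,n^2\}$ — is routine bookkeeping.
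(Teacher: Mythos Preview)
Your proposal is correct and follows essentially the same approach as the paper: establish that $\phi$ is non-decreasing, submodular, and efficiently computable via augmenting paths (Lemma~\ref{lemma:helper} and Theorem~\ref{thm:submod}), then invoke Wolsey's theorem with the bound $\phi_{\max}\le\min\{\kappa,|P|\}\le\min\{\kappa,n^2\}$ to obtain the $O(\log(\min\{\kappa,n\}))$ factor. The paper in fact leaves the proof of Theorem~\ref{thm:log-approx} implicit in the surrounding text (``Given the submodularity and the augmenting path construction, we have derived our main result''), so your write-up is, if anything, more explicit about the checks---particularly the point that Lemma~\ref{lemma:helper} is what guarantees the incrementally maintained $A(M)$ really equals $\phi(M)$ at each step.
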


\subsection{Lower Bound and Optimality}

Theorem~\ref{thm:log-approx} is essentially the 
best we can hope for:
%The following theorem follows
%from a reduction from the minimum
%set cover problem:

\begin{theorem}
The middlebox deployment problem
is NP-hard and cannot
be approximated within $c\log n$, for some $c > 0$
unless $P = NP$. Furthermore, it is not
approximable within $(1-\epsilon)\ln n$,
for any $\epsilon > 0$, unless NP  $\subset\mbox{\sc Dtime}(n^{\log\log n})$.
\label{theorem:hardness}
\end{theorem}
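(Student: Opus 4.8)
The plan is to establish both hardness and inapproximability via reductions from the \emph{set cover} (equivalently, \emph{dominating set}) problem, exploiting the fact — already illustrated in Section~\ref{sec:idea} with the star example — that our middlebox deployment problem generalizes a covering problem once we strip away the capacity constraint by setting $\kappa \geq |P|$. First I would recall the known hardness of set cover: by Lund and Yannakakis~\cite{LundYannanakis94} it is not approximable within $c\log n$ for some $c>0$ unless $P=NP$, and by Feige~\cite{Feige98} not within $(1-\epsilon)\ln n$ for any $\epsilon>0$ unless $NP\subset\mbox{\sc Dtime}(n^{\log\log n})$. So it suffices to exhibit an approximation-preserving reduction that maps a set cover instance with ground set of size $N$ (and polynomially many sets) to a middlebox instance on $\Theta(\mathrm{poly}(N))$ nodes, such that a feasible middlebox deployment of size $k$ corresponds to a set cover of size $k$ and vice versa.

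The key construction I would carry out: given a set cover instance with universe $\mathcal{U}=\{e_1,\dots,e_N\}$ and sets $S_1,\dots,S_t\subseteq\mathcal{U}$, build a graph in which each element $e_i$ is realized as a communicating pair $(s_i,t_i)$ and each set $S_j$ as a candidate middlebox location $m_j\in U$. I would wire distances so that $m_j$ lies on a within-stretch route for the pair $(s_i,t_i)$ if and only if $e_i\in S_j$ — e.g., attach $s_i$ and $t_i$ as pendant leaves to a central gadget, put $m_j$ at unit distance from exactly those $s_i,t_i$ with $e_i\in S_j$ and far from all others, and pick the stretch $\rho$ (with the direct $s_i$–$t_i$ edge trick from Section~\ref{sec:idea}, or a tuned path length) so the route through $m_j$ is admissible precisely for the incident pairs. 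Set $\kappa\geq N$ so capacities never bind. Then I would verify the two directions of the correspondence: any set cover $\{S_j : j\in J\}$ yields the feasible deployment $M=\{m_j:j\in J\}$ covering every pair, and conversely any feasible deployment $M$ induces the cover $\{S_j : m_j\in M\}$, since each pair being served forces one of its admissible $m_j$'s to be selected. Because the reduction is an exact bijection on solution sizes and blows up the instance only polynomially, any $\alpha(n)$-approximation for middlebox deployment gives an $\alpha(\mathrm{poly}(N))$-approximation for set cover; choosing the polynomial blow-up carefully (keeping it at most polynomial, ideally near-linear) preserves the $c\log n$ and $(1-\epsilon)\ln n$ thresholds up to the constant. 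NP-hardness of the decision version follows from the same reduction applied to the NP-complete decision version of set cover / dominating set.

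The main obstacle, and the step I would spend the most care on, is engineering the distance metric so that the incidence structure $e_i\in S_j$ is captured \emph{exactly} by the stretch constraint $d(s_i,m_j)+d(m_j,t_i)\leq\rho\, d(s_i,t_i)$ — and, crucially, that no \emph{spurious} short routes appear through nodes other than the intended gadget vertices (shortest paths in a graph are global objects, so a carelessly placed edge can create an unintended admissible middlebox). I would handle this by making the $s_i$–$t_i$ shortest path length a fixed small constant via a direct edge, making all "forbidden" $m_j$–$s_i$ distances strictly exceed the budget by padding paths, and verifying that the only vertices of $U$ within budget for a pair are exactly the $m_j$ with $e_i\in S_j$. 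A secondary point to check is that the instance size $n$ is polynomially (and not super-polynomially) related to the set cover parameter, so that the logarithmic inapproximability threshold transfers with only a constant-factor loss; this is routine once the gadget sizes are tallied.
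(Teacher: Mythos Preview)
Your proposal is correct and follows essentially the same approach as the paper: reduce from Set Cover by making each element a communication pair, each set a candidate middlebox location, and encoding incidence via the stretch constraint (with capacity set large so it never binds), then invoke the known inapproximability of Set Cover. The paper's concrete construction is simpler than you anticipate and sidesteps your ``main obstacle'' entirely: it builds a bipartite graph with pair-nodes $v_s,v_t$ on one side and set-nodes $v_{S_i}$ on the other (edges $(v_s,v_{S_i}),(v_{S_i},v_t)$ iff $v\in S_i$), with \emph{no} direct $s$--$t$ edge, so every shortest $v_s$--$v_t$ path already has length exactly~2 through some set-node and taking $\rho=1$ makes admissibility coincide exactly with membership---no direct-edge trick, padding, or spurious-route analysis is needed.
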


\begin{proof}
We present a polynomial time reduction from the Minimum
Set Cover (MSC) problem, defined as follows:
Given a finite set $S$ of $n$ elements and a collection $C$ of subsets of $S$.
A set cover for $S$ is a subset $C'\subseteq C$ such that
every element in $S$ is contained in at least one member
of $C'$. The objective is to minimize the cardinality of the set cover $C'$.

Consider an instance of the MSC problem:
let $S=\{v_1,...,v_n\}$ be a set of $n$
elements, $C = \{S_i\subseteq S, i=1,...,m\}$.
We define the instance of the corresponding
middlebox deployment problem in a network $G=(V,E)$
with a set of communicating pairs $P$ and stretch $\rho=1$ 
 %maximum path length $\ell=2$ 
as follows. For each element $v\in S$,
we introduce two nodes $v_s$ and $v_t$ in $V$. 
For each subset $S_i\in C$, we introduce a node $v_{S_i}$ in $V$ as well.
The edge set $E$ of the network $G=(V,E)$ is defined
by the following rule: there is an edge $(v_s,v_{S_i})\in E$
and an edge $(v_{S_i},v_t)\in E$ iff the corresponding
element $v$ is contained in $S_i$. 
The set of communicating pairs is defined as $P=\{(v_s,v_t) : v\in S\}$
and the set of potential middlebox locations is defined as $U=\{v_{S_i}:S_i\in C\}$.
$G=(V,E)$ is a bipartite graph with partitions $U$ 
and $\{v_s:v\in S\}\cup\{v_t:v\in S\}$.
If $v\in S$ is contained in a set $S_i\in C$ then there is a
path of length 2 between the corresponding pair $(v_s,v_t)$ in~$G$.
This is also the shortest path between $v_s$ and $v_t$.
In the middlebox deployment problem with stretch $\rho=1$, a
set of nodes $M\subseteq U$ of minimum cardinality must be selected
such that between each pair $(v_s,v_t)\in P$ there is a route of
length of at most~2 and it contains at least one node of $M$. By
the construction of the network, for each pair $(v_s,v_t)\in P$,
there is a route $v_s,v_{S_i},v_t$ of length 2 in $G$ if and only
if $v\in S_i$. Let  $M\subseteq U$
be a minimum cardinality solution of the middlebox deployment problem.
%, i.e.,
%installing network functions at the nodes of $M$ allows to connect
%each pair $(v_s,v_t)\in P$ by a path of length at most 2, such
%that each path contains a node of $M$. 
The node set $M$ implies a minimum cardinality solution for the MSC problem
and vice versa.
This proves the NP-hardness of the problem.

The inapproximability results follow from the combination of the
above reduction and the inapproximability results of the minimum
set cover problem by Raz and Safra~\cite{RazSafra97} and
by Feige~\cite{Feige98}. Raz and Safra~\cite{RazSafra97} proved that
the minimum set cover problem is not approximable within $c\log n$,
for some $c > 0$, unless $P = NP$. Feige~\cite{Feige98} showed the
inapproximability within  $(1-\epsilon )\ln n$, for any $\epsilon >0$,
unless  $NP \subset\mbox{\sc Dtime}(n^{\log\log n})$.
If we had a better approximation for the middlebox deployment problem,
by the above reduction, we would have a better approximation factor
for the minimum set cover problem, as well.
\end{proof}

\section{Group and Weighted Variant}\label{sec:groups}
%\vspace{-2pt}

There are scenarios in which more than two nodes may have to share
a network function. For example, consider the problem of placing a multiplexer
for a group of users involved in a teleconference. Or imagine the problem of mapping
a shared object or entire virtual server
server of a multi-user game: in order to avoid state synchronization
overheads, the users should be served from a single location
which is also located close to all the users. Also in the context of
distributed cloud computing, the problem of placing functionality 
for larger groups may be relevant. 
Moreover, so far we have assumed that all requests induce the same 
load. However, in reality, different communication pairs communicate
at different rates, which may also result in different loads on
the middleboxes. 

In this section, we show that both the group assignment
variant as well as the weighted pair request variant
can be solved by exploiting an interesting 
connection to energy-efficient scheduling~\cite{fleischer-subm,monien-scheduling,khuller}. 
As we will see,  
in order to make this generalization, 
we will however need to sacrifice the incremental property.
Moreover, we need a constant factor resource augmentation (concretely,
a factor of 2).

Let $n$ be the total number of requests (set $R$),
and let $p_j$ be the ``price'', i.e., the load induced by assigning
the $j$-th request to a middlebox: the request
can be an arbitrarily weighted node pair
or group of nodes. 
%We normalize the values
%such that the for minimal load 
%it holds that $\min_j p_j \geq 1$.
%
Our greedy algorithm runs in rounds, where in 
each round, the ``most effective'' middlebox, a middlebox at a location
which 
maximizes a certain (submodular) function,
is chosen and added to our solution set.
Concretely, given a set $S$ of already deployed middleboxes,
let $F(S)$ denote
the maximum number of requests that can 
be processed by a set $S$ of middleboxes of capacity $\kappa$.
Since computing $F(S)$ is NP-hard, we follow the work of~\cite{khuller} and compute a fractional solution
using linear programming. The resulting fractional values
will eventually be rounded.
 For every middlebox-request pair $(i,j)$, 
 we introduce a binary variable $x_{i,j}$:
 it is 1, if request $j$ is
assigned to middlebox location $i$, and 0 
otherwise. 
As a preprocessing step, let us first delete (1) all
requests $j$ with $p_j>\kappa$, i.e.,
requests which cannot be served by
any middlebox anyway; (2) all request-middlebox
pairs which would exceed stretch constraints;
and (3) all request-middlebox pairs where the middlebox
is not at a legal location. 
We set the corresponding
$x_{i,j}$ values to 0. 
We will relax the $x_{i,j}$ variables,
and define $f(S)$ to be the maximum weighted sum of requests
that can
be fractionally processed by a set $S$ of 
middleboxes.
We can compute $f$ with Linear Program (LP)~\ref{alg:maximum-fractional-assignment}:

\begin{figure}[tbhp]
 {
 
 \vspace{-4pt}
 
 \def\arraystretch{1.3}
 \removelatexerror
 
  \LinesNotNumbered
  \begin{IPFormulation}{H}
  \SetAlgorithmName{LP}{}{{}}
\noindent
\begin{tabular}{RFRLQ}
\textrm{max} & \multicolumn{4}{L}{ f(S) =  \sum_{i,j} x_{i,j} } \\
\textnormal{subject to} & \sum_{i\in U} x_{i,j} & \leq & 1  &\qquad \forall\ \text{$j\in R$}  \\ %i=1,...,k  \\
& \sum_{ j \in R} p_j x_{i,j} & \leq  & \kappa   &\qquad \forall\ i\in S \\
& x_{i,j}& = & 0 & \qquad \forall ~ i \in S, j \in R: p_{i,j} > \kappa \\
& \multicolumn{3}{C}{0 \leq x_{i,j} \leq 1} & \qquad \forall\ i \in S, j \in R  
\end{tabular}
  \caption{Maximum Fractional Assignment}
  \label{alg:maximum-fractional-assignment}
  \end{IPFormulation}
  }
\end{figure}

\begin{algorithm}[t]
    \caption{Generalized Greedy Algorithm}
    \label{alg:gen-augment}
    \begin{algorithmic}[1]
			\STATE \textbf{init} $S\gets \emptyset$
            \STATE \textbf{while} $|S|<|U|$ and $f(S)\leq n-1$ \textbf{do}
    		\STATE ~~~\textbf{choose} $i\in U\setminus S$ s.t.~$gain(i,S)$ is 
    		maximized
 			\STATE ~~~$S = S \cup \{i\}$
\STATE \textbf{select middleboxes in set $S$}
\STATE \textbf{round $f(S)$ to integer solution}
			\STATE \textbf{end while}
%			\STATE \textbf{if} $S=U$ and $f(S)\leq n-1$:
%			\STATE ~~~ \textbf{return} \emph{infeasible}; \textbf{else return} $U$ 
    \end{algorithmic}
\end{algorithm}

It follows from~\cite{fleischer-subm}
that the relaxed function $f$ is submodular. We define
$gain(i, S) = f (S \cup \{i\}) - f (S)$ for any $S\subseteq U$, $i\in U$. 

The greedy algorithm (Algorithm~\ref{alg:gen-augment})
adds middleboxes one-by-one, greedily selecting the 
one yielding the highest $gain(i,S)$.
We stop when $f(S)> n-1$: stopping prematurely
is needed due to the fractional solution of the relaxed approach~\cite{khuller}. 
In the end, 
we round the fractional solution using 
the approach by Shmoys and Tardos~\cite{tardos-shmoys}.

Given this transformation, we can apply~\cite{khuller}
and obtain that following %bicriteria 
approximation:
with an augmentation factor of 2, we
can serve all requests with at most a logarithmic
factor more middleboxes.
\begin{theorem}\label{thm:groups} 
The weighted resp.~group network function placement problem can be 
$(2, 1+\ln n)$-approximated via the Generalized Greedy 
Algorithm in polynomial time, i.e., 
the number of deployed middleboxes is at most $(1+\ln n)$ times 
the minimum and the maximum load on each middlebox is at most $2\kappa$.
\end{theorem}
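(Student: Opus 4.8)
The plan is to reduce the weighted/group placement problem to the capacitated maximum-coverage framework of Khuller, Moss, and Naor (cited as~\cite{khuller}), using the submodularity of the relaxed objective $f$ established via~\cite{fleischer-subm}. The argument proceeds in three stages: (i) show the relaxed coverage function $f$ is monotone and submodular so that the greedy selection in Algorithm~\ref{alg:gen-augment} enjoys the standard greedy guarantee; (ii) bound the number of middleboxes selected by the greedy loop before $f(S) > n-1$ by $(1+\ln n)\cdot\mathrm{OPT}$, where $\mathrm{OPT}$ is the minimum number of middleboxes needed to serve all $n$ requests; and (iii) convert the fractional assignment witnessed by $f(S)$ into an integral assignment, at the cost of a factor-$2$ violation of each capacity $\kappa$, via the rounding theorem of Shmoys and Tardos~\cite{tardos-shmoys}.

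First I would make precise the correspondence between feasible integral solutions and set covers of $R$: a set $S \subseteq U$ of middleboxes can serve all requests (integrally) if and only if the LP~\ref{alg:maximum-fractional-assignment} with middlebox set equal to the optimum $S^\star$ has a value of $n$ attained by a $0/1$ solution, and conversely any integral LP solution of value $n$ is a genuine assignment. Since $\mathrm{OPT}$ middleboxes suffice integrally, they certainly suffice fractionally, so $f(S^\star) = n$ for an optimal set $S^\star$ of size $\mathrm{OPT}$. Then, invoking submodularity and monotonicity of $f$ (this is exactly where~\cite{fleischer-subm} enters: the maximum fractional assignment value over a ground set of machines is submodular in the machine set), the classical greedy-for-submodular-coverage analysis applies verbatim: after $t$ greedy steps the residual uncovered fractional mass is at most $n\,(1-1/\mathrm{OPT})^t$, so once $t \ge \mathrm{OPT}\ln n$ we have $f(S) > n-1$. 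Hence the while loop of Algorithm~\ref{alg:gen-augment} terminates after at most $\lceil \mathrm{OPT}\ln n\rceil + 1 \le (1+\ln n)\,\mathrm{OPT}$ iterations, giving the claimed multiplicative bound on the number of deployed middleboxes. The $n-1$ (rather than $n$) stopping threshold is the standard fix for the fact that $f$ is only fractional — stopping at $n$ might never happen, but $f(S) > n-1$ combined with integrality of the final rounded solution still yields full coverage.

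The last step is the rounding. The greedy terminates with a set $S$ and a fractional solution $x$ to LP~\ref{alg:maximum-fractional-assignment} of value $f(S) > n - 1$. This LP is precisely a (fractional) generalized assignment / scheduling-type assignment of jobs $j$ with processing times $p_j$ to machines $i \in S$ of capacity $\kappa$. Applying the Shmoys--Tardos rounding~\cite{tardos-shmoys} (or equivalently the Lenstra--Shmoys--Tardos scheduling rounding), we obtain an integral assignment $\bar{x}$ that assigns the same total amount — at least $n-1$ requests, hence, being integral, all $n$ requests — while the load on each machine increases by at most the largest processing time of a job assigned there, which is at most $\kappa$ after the preprocessing deletion of jobs with $p_j > \kappa$. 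Therefore each middlebox carries load at most $\kappa + \kappa = 2\kappa$. Combining the two bounds gives a $(2, 1+\ln n)$-approximation, as stated.

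The main obstacle I anticipate is not any single calculation but the careful bookkeeping at the seam between the fractional and integral worlds: one must check that the greedy stopping rule $f(S) > n-1$ genuinely forces the \emph{rounded} solution to cover all $n$ requests (the rounding preserves the assigned mass, and integrality of a mass exceeding $n-1$ forces it to equal $n$), and that the factor-$2$ capacity blow-up is exactly what the Shmoys--Tardos additive "$+\,p_{\max}$" guarantee yields under the $p_j \le \kappa$ preprocessing. A secondary point to state cleanly is the submodularity of $f$: we should explicitly record that $f(S)$ is the optimum of a packing LP parametrized by the machine set $S$, and that adding a machine only relaxes constraints, so $f$ is monotone, while the diminishing-returns property is exactly the content imported from~\cite{fleischer-subm,khuller}. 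Everything else — the $(1-1/\mathrm{OPT})^t$ decay and the $\ln n$ accounting — is the textbook greedy-coverage computation and needs no new ideas.
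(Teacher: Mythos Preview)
Your proposal is correct and follows essentially the same approach as the paper: the paper establishes Theorem~\ref{thm:groups} by reducing the weighted/group placement problem to the machine-activation scheduling framework of~\cite{khuller} (submodularity of the fractional coverage $f$ via~\cite{fleischer-subm}, the standard greedy $(1+\ln n)$ bound, and Shmoys--Tardos rounding~\cite{tardos-shmoys} for the factor-$2$ capacity blow-up) and then invokes their result as a black box, whereas you unpack that same argument explicitly. The only caveat is the seam you already flagged---that $f(S)>n-1$ together with the rounding forces full integral coverage---which is not quite ``the rounding preserves assigned mass'' but is handled in~\cite{khuller}; since both you and the paper defer to that reference for this step, the approaches coincide.
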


\noindent\textbf{Relationship to Datacenter Scheduling.}
Let us now elaborate more on the interesting relationship of this problem to datacenter scheduling problems. 
In the datacenter scheduling problem~\cite{khuller},
we are given a set of $m$ machines and $n$ jobs.
The processing time of job $j$ on machine $i$ is 
$p_{i,j}$. Each machine $i$ has an activation
cost $a_i$. The scheduling algorithm is given a
total activation budget $A$, and needs to compute
a subset of machines to activate which respect this
budget constraint, and minimizes the \emph{makespan} $T$: 
the length of the execution schedule.

We translate this problem to our problem as follows. 
The machines correspond to the $m$ possible middleboxes
(set $M$ corresponds to all possible locations $U$),
and the jobs correspond to the $n$ requests.
The cost of assigning a group resp.~a weighted 
request $j$ to a middlebox $i$ corresponds to $p_{i,j}$;
since in our model, this cost only depends on the request $j$ but
not on the middlebox $i$, we simply write $p_{j}$ instead of $p_{i,j}$.
The activation cost $a_i$ corresponds to the deployment cost
and is the same for all middleboxes; we set $a_i=1$.

In our algorithm, we greedily aim to select a subset of servers / middleboxes
which maximize the (weighted sum of) covered requests.
We need to cover all the requests, subject to middlebox capacity constraints
$\kappa$, which correspond to the makespan.

\section{Evaluation}\label{sec:eval}
%\vspace{-2pt}

\begin{figure}[b!]
\centering 
\begin{minipage}{0.51\textwidth}
\centering 
\begin{tabular}{c|c|c|c}
Name  & Type & $|V|$  & $|E|$ \\ \hline
Quest 		& Continent  & 20 & 62 \\
GtsHungary 	& Country  & 30 & 62 \\
Geant 		& Continent & 40 & 122 \\
Surfnet 	& Country  & 50 & 136 \\
Forthnet 	& Country  & 62 & 124 \\
Telcove 	& Country  & 71 & 140 \\
Ulaknet 	& Country 		& 82 & 164 
\end{tabular}
\bigskip
\end{minipage}
\begin{minipage}{0.43\textwidth}
\centering 
\begin{tabular}{c|c|c|c|c}
Name  & Type & $|V|$  & $|E|$ & $|D|$\\ \hline
cost266 		 & Continent  & 37 & 55  &  1,332\\
germany50    & Country  & 50 & 88    &  662 \\
nobel-eu    	& Continent  & 28 & 41   &  378 \\
ta2 			& Country & 65 & 108     &  1,869\\
\end{tabular}

\end{minipage}

\caption{Topology Zoo Instances (top) as well as SNDlib instances (bottom) used in the evaluation. $|D|$ denotes the number of defined end-to-end communication requests.}
\label{tab:topologies}
\end{figure}

In order to complement our formal analysis,
we conduct a simulation study comparing the approximation algorithms presented above to the respective optimal algorithms implemented using Integer Programming (IP).
We evaluate our greedy approximation algorithm using the following metrics: (1) the number of installed middleboxes and (2) runtime.
Furthermore, we explore the performance of the %bicriteria 
approximation algorithm for the weighted variant and when incrementally deploying single middleboxes.

All experiments were conducted on servers with two Intel XEON L5420 processors (8 cores overall) equipped with 16 GB RAM.

\subsection{Datasets}
\label{sec:datasets}

We have generated two datasets, one for the unweighted approximation algorithm presented in Section~\ref{sec:approximation} and one for the weighted approximation algorithm presented in Section~\ref{sec:groups}.

For the unweighted variant, we use real-world wide-area topologies 
obtained from the topology zoo collection~\cite{topologyZoo}. 
As shown in Figure~\ref{tab:topologies} (top), the topologies are either 
country-wide or continent-wide ISP, backbone, or -- in case of Geant --  
research networks. The topologies were selected so 
that the number of nodes is equally spread across the range 20 to 82. 
Furthermore, the topologies provide detailed geographical information for nodes, 
such that communication latencies can be estimated based on the
geographical distance. 

For each graph, we generate instances as follows:
Each of the $|V| \cdot (|V|-1) / 2$
potential communication pairs is selected with probability $p$,
set to $0.2$, $0.3$, or $0.4$, respectively. Hence, the number of expected communication pairs to be created overall is $p \cdot |V| \cdot (|V|-1) / 2$.

To ensure comparability
across topologies, all nodes are allowed to host middlebox functionality.
The capacity $\kappa$ is set to $\lceil 2 \cdot (|V|-1) \cdot p\rceil$:
Hence, (in expectancy) around $|V|/4$ many middleboxes are likely to be necessary to serve the communication pairs.
All communication pairs have the same
(latency) stretch. Concretely we set the maximal allowed latency
to $\{1.00, 1.05, 1.10, \dots 2.50\}$ times the latency of the shortest
path between the nodes of the communication pair. For each topology and each probability,
we generated $11$ scenarios uniformly at random, yielding more than 7,000 instances overall.

We employ a similar approach for the evaluation of the %bi-criteria 
approximation algorithm for the weighted variant. Concretely, we consider four instances (see Figure~\ref{tab:topologies} (bottom)) of the SNDlib~\cite{Orlowski2010}, which already define communication pairs exhibiting a diverse demand structure. Analogously, we  again consider stretches of $\{1.0, 1.05, \dots, 2.5\}$. To generate multiple instances from a single SNDlib instance, we sample requests by selecting each original communication pair independently at random with a probability of 0.5. We again allow to place middleboxes on all nodes in the network and set the capacity of each of the potential middlebox locations to $4 \cdot D / |V|$, where $D$ denotes the cumulative demand of the chosen requests, such that at least $1/4$ many nodes must be equipped with middleboxes.
We generate 3,100 scenarios in total by considering 31 different stretch parameters and considering 25 random generated instances for the four different topologies.

\subsection{Baseline Algorithms}\label{sec:01}

Besides employing the approximation algorithms presented above, we use Integer Programs to compute optimal solutions. Integer Program (IP)~\ref{alg:baseline} computes optimal solutions for the middlebox deployment problem with unitary demands\footnote{
The existence of a 0-1 integer linear program,  
together with our NP-hardness result, also proves the NP-completeness~\cite{Karp72}.} and works as follows:
%:
%0-1 integer linear programming is one of Karp's 21
%NP-complete problems~\cite{Karp72}.

For all potential middlebox locations $u\in U$, let $S_u$ be the
set of pairs $P$ that can be routed through $u$ on a path
of stretch at most~$ \rho$,
i.e., $S_u = \{p=(s,t)\in P$ : $d(s,u)+d(u,t)\leq \rho \cdot d(s,t)\}$,
where $d(v,w)$ denotes the length of the shortest path between
nodes $v,w\in V$ in the network.
$S_u$ can be precomputed efficiently.
%Let $I_u=\{i : (s_i,t_i) \in S_u\}$ be the set of indices of the pairs in $S_u$.
For all potential middlebox locations $u\in U$, we introduce the
binary variable $x_{u}\in \{0,1\}$.
The variable $x_u$ indicates that $u$ is selected as a
middlebox node in the optimal solution $M$, i.e.
$x_u = 1 \Leftrightarrow u\in M$.
For all $u\in U$ and $p\in P$, %$i=1,...,k$,
we introduce the binary
variable $x_{up}\in \{0,1\}$.
The variable $x_{up}$ indicates that the pair $p=(s,t)\in P$
is assigned to the node $u\in U$, s.t.~the path stretch from
$s$ to $t$ through $u$ is at most $\rho$.

%\begin{displaymath}

\begin{figure}[t]
 {
 \removelatexerror
 
 \def\arraystretch{1.4}
  \LinesNotNumbered
  \begin{IPFormulation}{H}
  \SetAlgorithmName{IP}{}{{}}
\noindent
\begin{tabular}{RFRLQR}
 & \multicolumn{4}{L}{~~~~\textrm{min} ~ \sum_{u\in U} x_u } &  \tagIt{P01_1}\\
\textnormal{subject to} & \sum_{u\in U} x_{up} & = & 1 \qquad & \forall\ p\in P & \tagIt{P01_2}\\ %i=1,...,k \\
& \sum_{p\not\in S_u} x_{up} & = & 0 & \forall\ u\in U & \tagIt{P01_3}\\
& \sum_{p\in P} x_{up} & \leq  & \kappa \cdot x_u &  \forall\ u\in U & \tagIt{P01_4}\\
& x_u, x_{up} & \in & \{0,1\} \qquad ~ & \forall\ u\in U, p\in P ~~~~~~& \tagIt{P01_5}
\end{tabular}
  \caption{Baseline for Minimizing the Number of Middleboxes}
  \label{alg:baseline}
  \end{IPFormulation}
  
  }
\end{figure}

%\end{displaymath}

\begin{figure}[t!]
 {
 \removelatexerror
 
 \def\arraystretch{1.4}
  \LinesNotNumbered
  \begin{IPFormulation}{H}
  \SetAlgorithmName{IP}{}{{}}
\noindent
\begin{tabular}{RFRLQR}

& \multicolumn{4}{L}{\textrm{max} ~ \sum_{u\in U} x_{up}} &  \tagIt{P01_1_cp}\\
\textrm{subject to} &  \sum_{u\in U} x_{up} & \, \leq \,  & 1 & \forall\ p\in P & \tagIt{P01_2_cp}\\ %i=1,...,k  \\
& \sum_{u\in U} x_u & \, = \, & n & \forall\ p\in P & \tagIt{P01_new_cp}\\
& \sum_{p\not\in S_u} x_{up} &\,  = \, & 0  & \forall\ u\in U & \tagIt{P01_3_cp}\\
& \sum_{p\in P} x_{up} & \, \leq \, & \kappa \cdot x_u  ~~~~~~~~~~ &\forall\ u\in U & \tagIt{P01_4_cp}\\
& x_u, x_{up} &\,  \in \, & \{0,1\}   &\forall\ u\in U, p\in P & \tagIt{P01_5_cp}
\end{tabular}
  \caption{Maximum Assignment for exactly $n$ Middleboxes}
  \label{alg:maximum-assignment-for-n-middleboxes}
  \end{IPFormulation}
  }
\end{figure}

The Objective Function (\ref{P01_1}) requires that a minimum
cardinality middlebox set must be selected.
Constraints (\ref{P01_2}) declare that each pair $p=(s,t)\in P$
is assigned to exactly one node $u\in U$.
Constraints (\ref{P01_3}) state that each pair $p=(s,t)\in P$ can
only be assigned to a node $u\in U$ with $p\in S_u$. By the definition
of $S_u$, the pair $p$ can be routed through $u$ via a path
of stretch at most $\rho$.
Constraints (\ref{P01_4}) describe that the capacity limit $\kappa$ must not
be exceeded at any node, and nodes $u\in U$ which are not selected in the
solution $M$ (where the corresponding variable $x_u$ becomes 0 in the solution)
are not assigned to any pair $p\in P$.

This Integer Program can easily be adapted for the weighted (and or group) variant by adapting Constraint (\ref{P01_4}). Concretely, Constraint~\ref{P01_4} needs to include the additional factor $d_p$, denoting the demand of request $p$:
\[
\sum_{p\in P} d_{p} \cdot x_{up} \leq \kappa \cdot x_u \qquad \forall u\in U~.
\]
%\vspace{-3pt}
As we are also interested in studying the opportunities arising in incremental deployment scenarios, we also introduce IP~\ref{alg:maximum-assignment-for-n-middleboxes} which computes the maximum assignment for any given number of middleboxes.  Concretely, 
given a number $n \in \mathbb{N}$ of middleboxes to activate (see Constraint~\ref{P01_new_cp}), the number of connected communication pairs is maximized (see Constraint~\ref{P01_1_cp}), while ensuring that a communication pair may at most be assigned to exactly one middlebox (see Constraint~\ref{P01_2_cp}).

%\vspace{-8pt}

We have implemented the IPs in Python using Gurobi 6.5.

\subsection{Runtime and Number of Middleboxes\label{sec:runtime}}
%\paragraph*{Runtime\label{sec:runtime}}
%
%\noindent\textbf{Runtime:} 
%

We first study the runtime and the empirical approximation factor of the approximation algorithm~\ref{alg:augment}  on the randomly generated topology zoo instances. To this end, we have implemented the greedy approximation algorithm in Python. Our implementation may use multi-threading for the computation of the middlebox-selection in Algorithm~\ref{alg:augment} (Lines 4-9). The computation of the Integer Program solutions is implemented via Gurobi 6.5 using a single thread.

\begin{figure*}[tbh]
\centering
%%\begin{subfigure}[t]{0.48\columnwidth}
%\centering
\includegraphics[width=0.648\columnwidth]{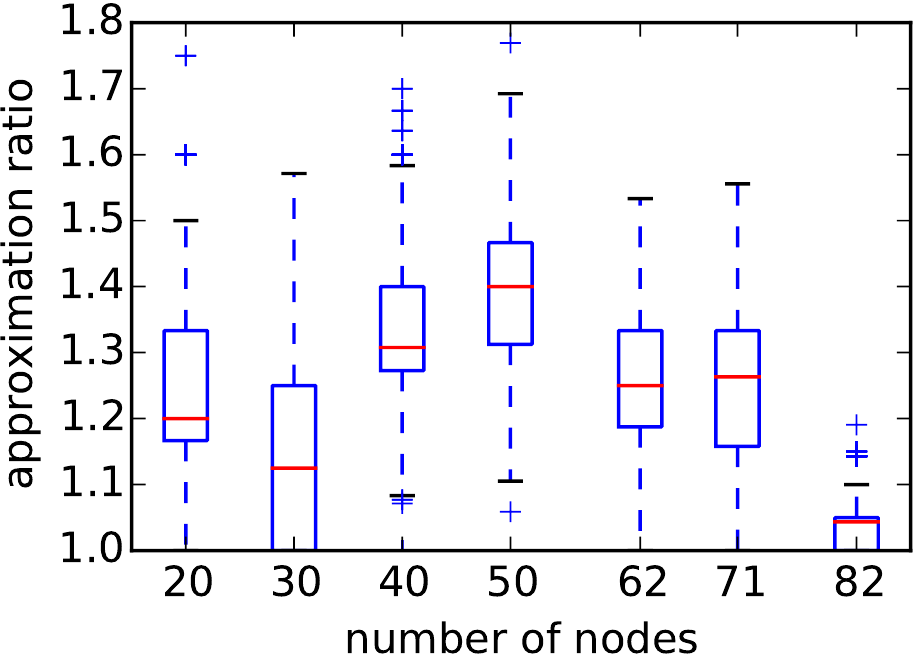}~~~
\includegraphics[width=0.6\columnwidth]{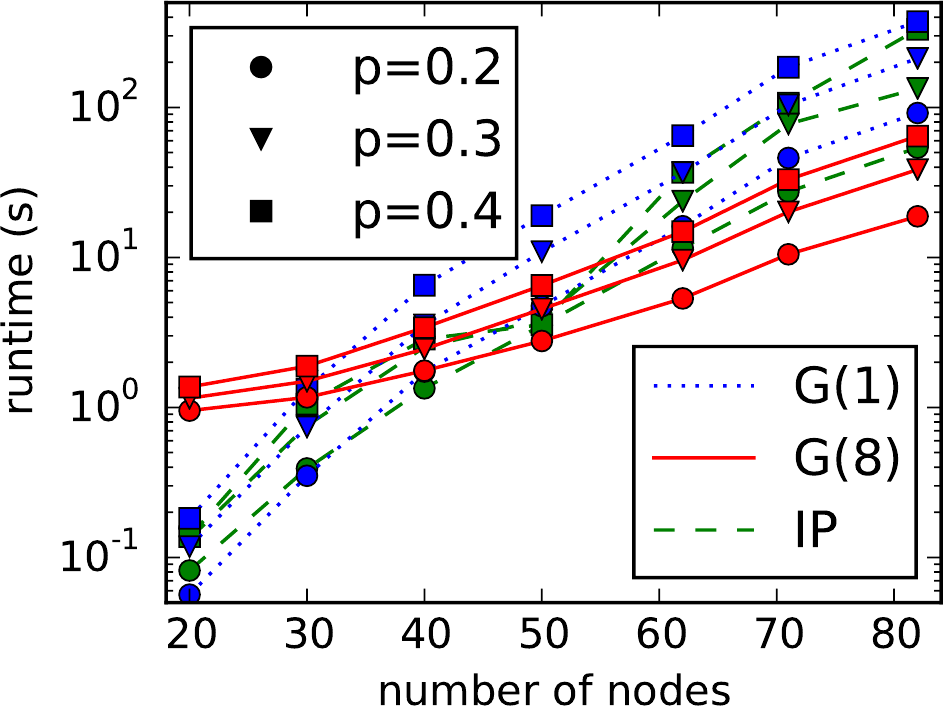}~~~
\includegraphics[width=0.6\columnwidth]{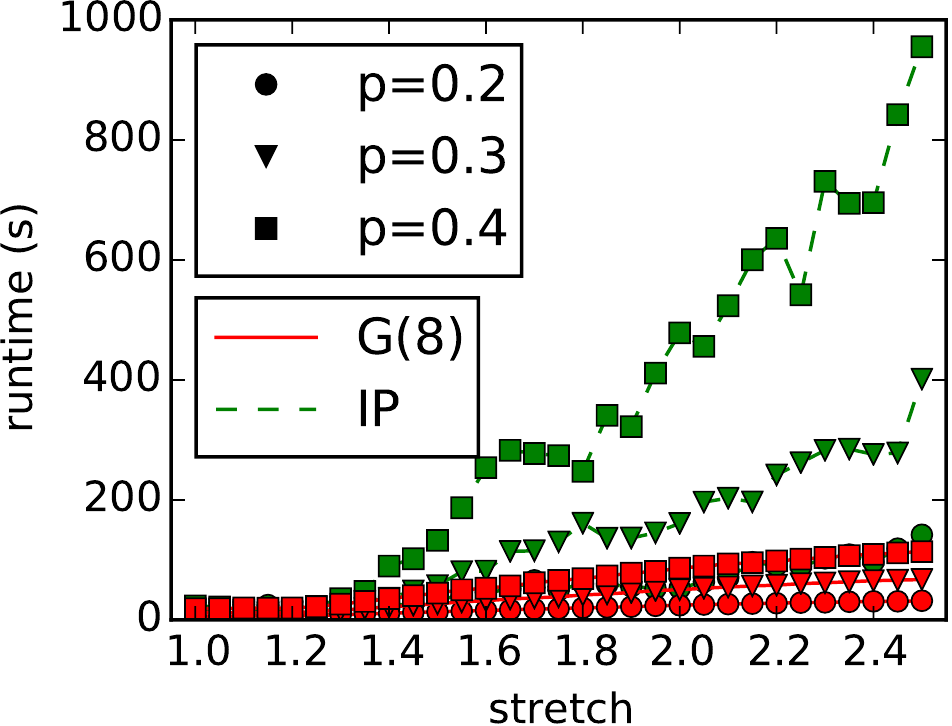}
%%\end{subfigure}
\caption{\emph{Left:} Empirical approximation ratio. \emph{Middle:} Runtime of the algorithms as a function of the different topologies. \emph{Right:} Runtime as a function of the stretch on the Ulaknet topology.}
\label{fig:runtime} 
\end{figure*}

In Figure~\ref{fig:runtime} (right and bottom), the average runtime
of the greedy (G) approximation algorithm with 1 and 8 threads,
and the IP is depicted. In the middle plot the averaged runtimes are shown.
Here, each data point represents the aggregate of $31 \times 11 = 341$ experiments.
The average runtime of the sequential greedy-algorithm lies below
the one of the IP for 20 and 30 nodes, and the greedy algorithm with 8 threads
clearly outperforms the IP on the topologies
with 62, 71 and 82 nodes. On the largest topology
the computation of the greedy algorithm can be sped up by a factor of around 5, by using 8 threads. Furthermore, the runtime of the IP is on the largest topology one magnitude
higher than the one of the 8-threaded greedy algorithm.
The left plot of Figure~\ref{fig:runtime} depicts the runtime of the
8-threaded greedy variant and the IP on the largest topology as a function of the stretch. Starting at a stretch of 1.3, the runtime of the IP increases dramatically.
This is due to the fact that by increasing the stretch, the number of potential middleboxes, serving a communication pair, increases. In fact, on the largest topology, the IP consisted of up to 90k variables.

Regarding the number of installed middleboxes, the left box plot of Figure~\ref{fig:runtime} shows the approximation ratio, i.e., the number of middleboxes opened by the greedy algorithm divided by the optimal number of middleboxes computed by the IP. The median lies below 
1.5 and the maximum is close to 1.8.

\subsection{Weighted Requests}

%The above evaluation has shown that the greedy algorithm proposed in this work has 
%distinctive advantages when considering incremental deployments. However, the fact that only %uniform communication requests may be considered may render the approach of lesser %importance in practice. Hence, 

Next, we study the performance of the approximation algorithm discussed in Section~\ref{sec:groups}, for weighted problems (cf. Algorithm~\ref{alg:gen-augment}). As discussed in Section~\ref{sec:datasets}, we use SNDlib instances exhibiting a diverse demand structure (see Figure~\ref{fig:group-evaluation-1-2} \emph{left}). Again, we have implemented the generalized approximation algorithm and the corresponding adaption of Integer Program~\ref{alg:baseline} in Python.

As a first result, the middle plot of Figure~\ref{fig:group-evaluation-1-2} shows the averaged relative number of deployed middleboxes of the greedy algorithm with respect to the optimal solution of the IP. The greedy approximation algorithm only seldomly opens -- on average -- more than 20\% middleboxes more than the optimal algorithm. Note that, in some cases, the number of deployed middleboxes lies even beneath the optimal one. This is possible, as the considered algorithm may (cf.~Figure~\ref{fig:group-evaluation-1-2} \emph{right}) violate the middlebox capacity up to a factor of 2. Indeed, the
approximation algorithm violates capacities in less than 25\% of the cases (cf.~Figure~\ref{fig:group-evaluation-1-2}).

\begin{figure*}[tbh]
\centering
\includegraphics[width=0.64\columnwidth]{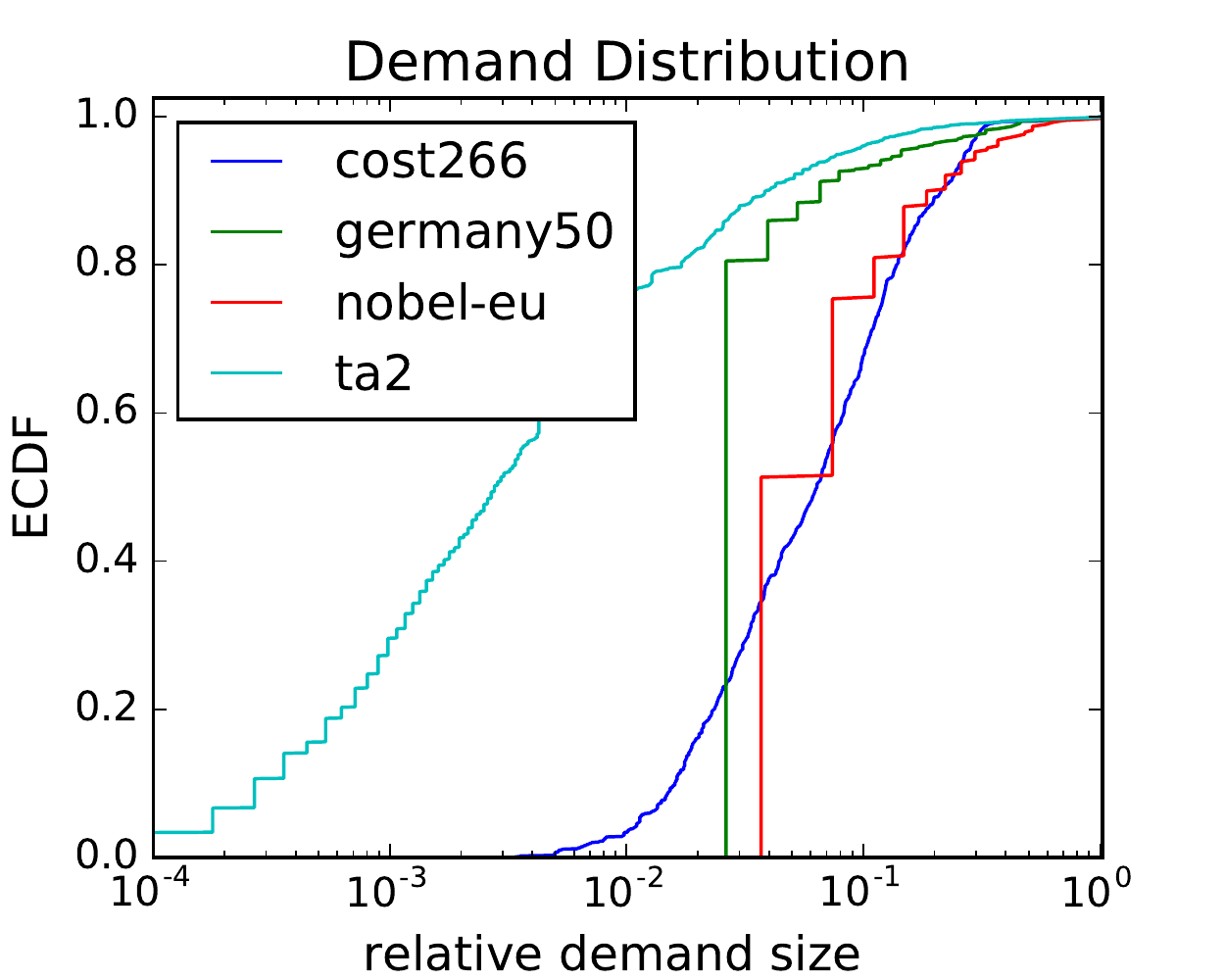}~~~~
\includegraphics[width=0.64\columnwidth]{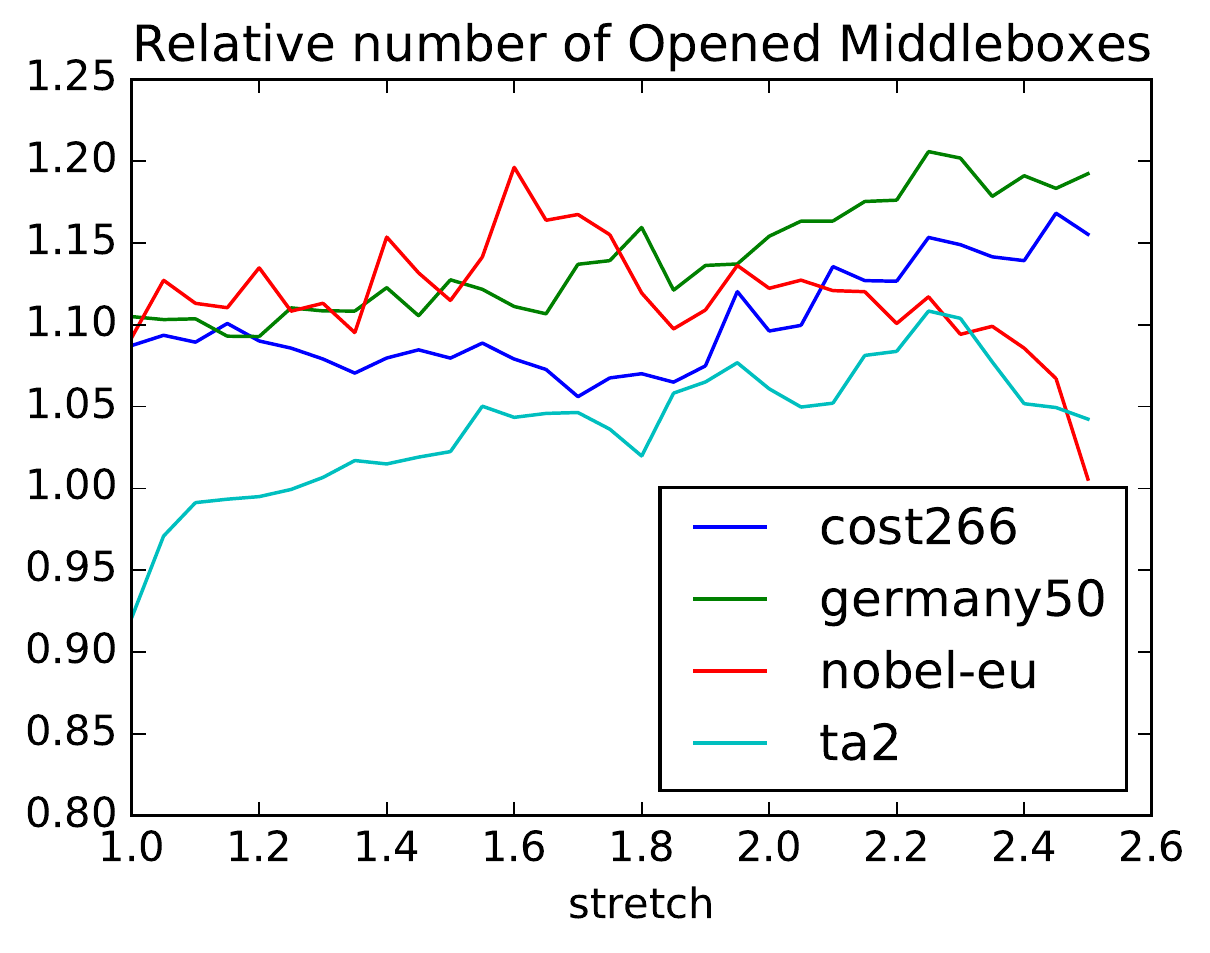}~~~~
\includegraphics[width=.64\columnwidth]{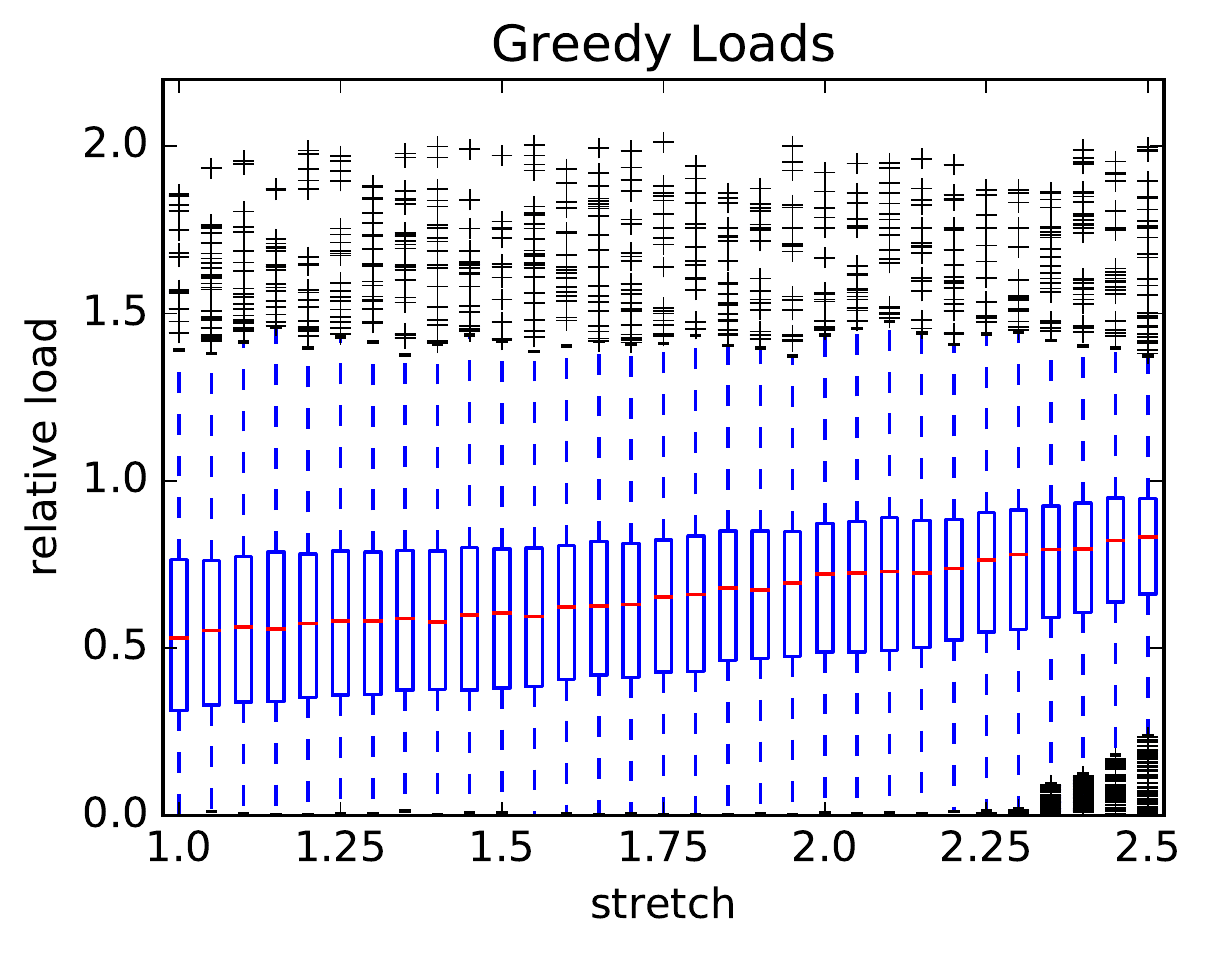}
\caption{\emph{Left:} The demand structure on the SNDlib instances as empirical cumulative distribution function. \emph{Middle:} The relative number of opened middleboxes compared to the solution of the baseline IP. \emph{Right:} The relative load, i.e. load divided by the available capacity, on the middleboxes computed by the $(2,1+\ln n)$-approximation algorithm.}
\label{fig:group-evaluation-1-2}
\end{figure*}

\subsection{Incremental Middlebox Activation\label{sec:incremental-middlebox-activation}}
%\paragraph*{Incremental middlebox activation -- reassigned pairs\label{sec:incremental-middlebox-activation}}
%
%\noindent\textbf{Incremental middlebox activation -- reassigned pairs:}
%\label{sec:incremental-middlebox-activation}}
%
Lastly, we study scenarios in which middleboxes are added one after another. Concretely, we assume that all communication pairs are known in advance while the network operator can only incrementally install single middleboxes (e.g. due to the associated cost). The greedy algorithm always places one additional middlebox while not being allowed to change previously selected middlebox locations. We study the number of assigned communication pairs of the greedy algorithm compared to the optimum number of assignments when middlebox locations may be arbitrarily selected. To compute the optimum number of assignments the IP~\ref{alg:maximum-assignment-for-n-middleboxes} is used.

We consider the same experimental setup as described in Section~\ref{sec:runtime} (i.e. topology zoo instances), while constraining the probability to create communication pairs to $p=0.3$. 

We present the results of this set of experiments in Figure~\ref{fig:incremental-mb-after-mb}. The left plot depicts the relative difference of assigned communication pairs when using the greedy algorithm compared to the IP baseline. Concretely, the relative difference is defined as $(\phi_{\mathrm{IP}}-\phi_{\mathrm{G}})/\phi_{\mathrm{IP}}$, where $\phi_{\mathrm{IP}}$ and $\phi_{\mathrm{G}} $ denotes the number of assigned communication pairs of the IP and the greedy algorithm, respectively.
The rows of the left plot averages the results for all scenarios having the same number of minimal middleboxes to serve all communication pairs. The number of scenarios averaged in this way is depicted in the right plot of Figure~\ref{fig:incremental-mb-after-mb}: the number of scenarios for which 10 middleboxes suffice and which are averaged in the row 10 is for example around 520.

Considering any row, we see that the greedy algorithm assigns nearly always as many communication pairs until the number of middleboxes reaches the optimum (minimal) number of middleboxes. After coming close to the optimum number of middleboxes, the relative difference in assignments reaches a maximum valiue of $0.15$ and then diminishes only slightly with each additional greedily placed middlebox. The ability to (re-)place middleboxes in an arbitrary fashion hence only becomes important when sufficiently many middleboxes were already places to serve almost all communication pairs.

\begin{figure}[h!]
\centering
\includegraphics[width=.50\columnwidth]{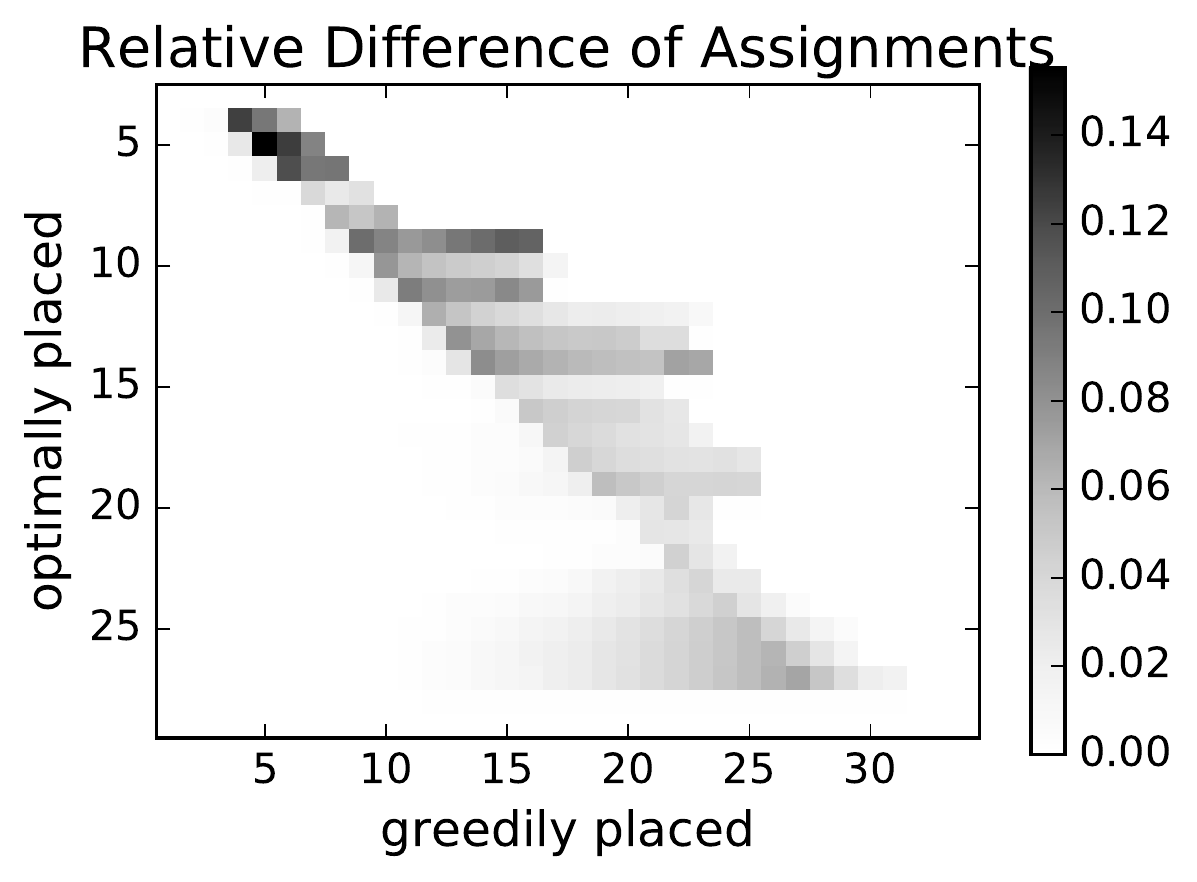}
\includegraphics[width=0.47\columnwidth]{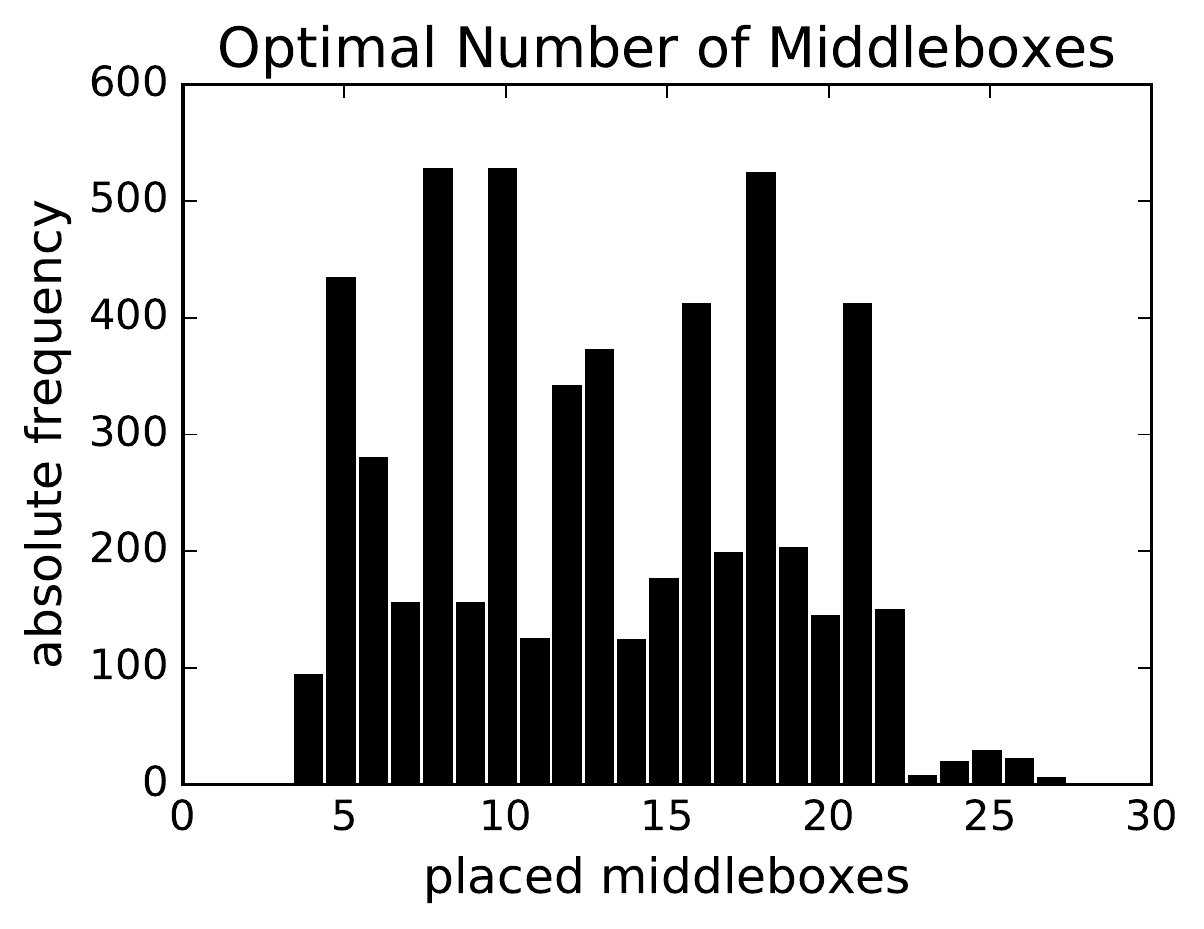}
\caption{\emph{Left:} The averaged relative 
difference in the number of served communications pairs depending on the optimal number of middleboxes (y-axis). \emph{Right:} Minimum number of middleboxes computed by the IP for the scenarios averages in the left plot. } 
\label{fig:incremental-mb-after-mb}
\end{figure}

%\begin{comment}
\section{Summary and Conclusion}\label{sec:summary}

%Our paper is motivated by the advent of more programmable and virtualized
%networks which introduce more flexible mechanisms to deploy middleboxes
%and route traffic through them.
%In order to understand how to reap the benefit resulting from the allocation flexibilities,
%we initiated the study of
%the underlying fundamental algorithmic problem.
%In particular,

This paper initiated the study of the network function placement
%a natural and interesting covering 
problem which is motivated
by the increasing flexibilities of modern 
virtualized networked systems. Our main contribution is a 
combinatorial proof of the submodularity of this problem and 
an incremental log-approximation 
network function placement algorithm.
We also initate the study of a randomized rounding approach for a 
weighted group-version of the problem.
Our simulation results show that 
%In simulations, we have also shown that our 
this approach computes a nearly optimal placement for real world network instances.
%as long as not almost all middleboxes
%have been deployed. 
%Moreover, we have shown that in the weighted settings, also 
%an integer program may perform quite well, at least for
%the medium-sized networks we considered.
  
We understand our work as a first step,
and believe that our paper opens several interesting directions
for future research. 
In particular, it will be interesting to know whether good approximations exist for the incremental deployment of entire group requests.  

%\end{comment}

%\balance

{
%\small \renewcommand{\baselinestretch}{0.65}
\bibliographystyle{abbrv}
\bibliography{typeinst}
}

\end{document}